\journal{Automatica}
\newtheorem{theorem}{Theorem}
\newtheorem{lemma}[theorem]{Lemma}
\newdefinition{ass}{Assumption}
\newdefinition{rem}{Remark}
\newdefinition{defn}{Definition}
\newtheorem{pro}{Proposition}
\begin{document}	
	\begin{frontmatter}
		\title{Online Parameter Identification of Generalized Non-cooperative Game\tnoteref{label1}}
				
		\author[1,2]{Jianguo Chen}\ead{chenjianguo@amss.ac.cn}
		\author[3,4]{Jinlong Lei }\ead{leijinlong@tongji.edu.cn}
		\author[1,2]{Hongsheng Qi}\ead{qihongsh@amss.ac.cn}
		\author[3,4]{Yiguang Hong}\ead{yghong@iss.ac.cn}
		

\address[1]{Key Laboratory of Systems and Control, Academy of Mathematics and Systems Science, Chinese Academy of Sciences, Beijing 100190, China;}
\address[2]{School of Mathematical Sciences, University of Chinese Academy of Sciences, Beijing 100049, China;}
\address[3]{Department of Control Science and Engineering, Tongji University, Shanghai 201804, China;}
\address[4]{Shanghai Research Institute for Intelligent Autonomous Systems, Shanghai 201210, China}

\begin{abstract}
This work studies the parameter identification problem of a generalized non-cooperative game, where each player's cost function is influenced by an observable signal and some unknown parameters. We consider the scenario where equilibrium  of the game at some observable signals can be observed with  noises, whereas our goal is   to identify the unknown parameters with the observed data.  Assuming that the observable signals and the corresponding noise-corrupted equilibriums are acquired sequentially, we construct this parameter identification problem as  online optimization and introduce a novel online parameter identification algorithm. To be specific, we construct a regularized loss function that balances conservativeness and correctiveness, where the conservativeness term ensures that the new estimates do not deviate significantly from the current estimates, while the correctiveness term is captured by the Karush-Kuhn-Tucker conditions. We then prove that when the players' cost functions are linear with respect to the unknown parameters and the learning rate of  the online parameter identification algorithm  satisfies $\mu_k \propto 1/\sqrt{k}$, along with other  assumptions, the regret bound of   the proposed algorithm is $O(\sqrt{K})$. Finally, we conduct numerical simulations on a Nash-Cournot problem to demonstrate that the performance of the online identification algorithm is comparable to that of the offline setting.
\end{abstract}

\begin{keyword}
Parameter identification\sep online learning\sep generalized non-cooperative game\sep inverse game\sep regret bound.
\end{keyword}
\end{frontmatter}

\section{Introduction}
The class of non-cooperative games  has found  wide applications in various domains, including  network security, urban traffic management, and power systems \cite[]{Roy,Alvarez,mei}. In cases where the feasible set of a player is influenced by the actions of other players, it is known as a generalized non-cooperative game (GNCG) \cite[]{genp1,gnep2,gnep3,gnep4,gnep5}. The model is important in the realm of economic sciences and has found utility in diverse fields like electricity \cite[]{gnep4,gnep7} and natural gas markets \cite[]{gnep8}. 
The fundamental concept in GNCG is the generalized Nash equilibrium (GNE), which is crucial for predicting individual players' strategies.

The computation of GNE heavily relies on the access to the cost functions of all players \cite[]{fran}. However, in practical settings, it is often the case that we can only observe the equilibrium behaviors of players in a given game while remaining unaware of the specific cost functions underlying the game \cite[]{fangfei}. For instance, in a competitive market consisting of multiple companies, we might be able to observe market pricing and product  volumes but lack precise information about production costs. Despite the parametric  uncertainty  in cost functions, we may still be able to identify these parameters through the observed equilibriums. Since knowing the parameterized game model  enables  us to make predictions about  players' future behaviors, parameter identification in the context of GNCG has significant importance in various fields such as autonomous driving \cite[]{Le}.

Real-life scenarios often present challenges where direct access to all equilibria is impossible. Instead, we are faced with the task of constantly observing new equilibrium results as the environment  evolves \cite[]{online2}. For instance, in a competitive market, the game is perpetually in progress, with companies continuously reaching new equilibria as the external market conditions change. In such dynamic situations, online parameter identification becomes necessary. Thus, this article focuses on the online parameter identification of GNCG.

\subsection{Literature Review}
The field of systems and control has witnessed extensive research on the parameter identification problem, particularly in the context of linear \cite[]{linear1,linear2,linear3} and nonlinear systems \cite[]{nonlinear1,nonlinear2,nonlinear3}. However, there remains a dearth of investigation regarding parameter identification in game systems, with a particular emphasis on online identification. Some literatures refer to parameter identification in non-cooperative games as inverse games. To address the challenges associated with parameter identification in differential game problems, researchers have employed the Pontryagin's maximum principle derived from optimal control \cite[]{Molloy, Cao, molloy2019, rothfu2017}. Moreover, an inverse optimization method has been specifically developed to facilitate the estimation of parameters within the cost functions of traffic flow games, which are modeled as generalized Nash games \cite[]{Allen}. When it comes to matrix games, the inverse problem primarily involves estimating the cost matrix \cite[]{fangfei, yu2022}. Furthermore, inverse reinforcement learning has been utilized to explore the intricacies of inverse Markov games \cite[]{lin2017}. However, it is worth noting that most of the aforementioned works on inverse games assume the availability of simultaneously observed data, resulting in the proposal of offline algorithms.

In \cite{Le}, an online parameter identification algorithm has been proposed for a game utilized in autonomous driving, employing the unscented Kalman filter. Similarly, \cite{Zhang} has introduced an online inverse dynamic game algorithm for linear quadratic games, inspired by online inverse optimal control algorithms, along with an analysis of solution uniqueness. However, the performance of these online algorithms has not been thoroughly examined. To the best of our knowledge, a comprehensive investigation of online parameter identification for games in a general setting is still lacking.

\subsection{Contributions}

This work focuses on the online parameter identification of GNCG. The main contributions can be summarized as follows:
\begin{itemize}
  \item We model the problem as an online convex optimization problem and propose an online parameter identification algorithm for GNCG. Specially, we  delicately  design a regularized loss function to   balance between conservativeness and correctiveness.
  \item We prove that when the players' cost functions are linear in the unknown parameters, the learning rate of the algorithm $\mu_k \propto 1/\sqrt{k}$ and other assumptions are satisfied, the regret bound of the online parameter identification algorithm is $O(\sqrt{K})$.
  \item Through simulations on a Nash-Cournot problem, we demonstrate that the performance of the online parameter identification algorithm closely resembles that of the algorithm in the whole batch setting after a few rounds.
\end{itemize}

\subsection{Paper Organization and Notation}
\emph{Paper Organization}: Section \ref{s2}  formulates the online parameter identification of GNCG. In Section \ref{s3}, we propose an online  algorithm for identifying    parameters of GNCG, while the regret bound is established in Section \ref{s4} with some proofs given in   Appendices. Section \ref{s5}  showcases numerical simulations on a natural gas market, and some concluding remarks are provided in Section \ref{s6}. 

\emph{Notations}: $\mathcal{R}^n$ represents the n-dimensional vector space; $\mathcal{R}^{n\times m}$ represents the space of $n \times m$-dimensional matrices; $\|\cdot\|_2$ denotes the $\mathcal{L}_2$-norm; bold letters denote vectors or matrices; $\mathrm{col}(\boldsymbol{x}_1,\cdots,\boldsymbol{x}_N)$ represents the column vector $(\boldsymbol{x}_1^T,\cdots,\boldsymbol{x}_N^T)^T$; vector $\boldsymbol{v}>0$ indicates that every element is positive, and vector $\boldsymbol{v}\geq0$ indicates that every element is non-negative; $\boldsymbol{a} \perp \boldsymbol{b}$ signifies that the product of the corresponding elements of vectors $\boldsymbol{a}$ and $\boldsymbol{b}$ is equal to 0; $\langle \cdot , \cdot \rangle$ denotes the inner product between vectors; and $R_{\geq0}$ denotes the set of non-negative real numbers.

\section{Problem Statement} \label{s2}

In this section, we present the problem statement concerning the online parameter identification of generalized non-cooperative games.

\subsection{Generalized Non-cooperative Game}
The generalized non-cooperative game (GNCG) is composed of $N$ players, denoted as $\mathcal{N} := \{1,\cdots,N\}$. Each player $v \in \mathcal{N}$ possesses control over its strategy $\boldsymbol{x}_v \in \boldsymbol{X}_v \subseteq \mathcal{R}^{n_v}$, where $\boldsymbol{X}_v$ represents the feasible set of decision variables for player $v$. For every player $v \in \mathcal{N}$, considering the parameter $\boldsymbol{\theta}_v \in \boldsymbol{\Theta}_v \subseteq \mathcal{R}^{n_v'}$, the observable signal $\boldsymbol{u}_v \in \boldsymbol{U}_v \subseteq \mathcal{R}^{n_v''}$ and the decision variables of other players $\boldsymbol{x}_{-v} \in \boldsymbol{X}_{-v} := \times_{s\neq v}\boldsymbol{X}_s$, it selects a strategy $\boldsymbol{x}_v$ to minimize the optimization problem
\begin{equation}\label{q}
  \begin{aligned}
&\min_{\boldsymbol{x}_v} f_v(\boldsymbol{x}_v;\boldsymbol{x}_{-v},\boldsymbol{u}_v,\boldsymbol{\theta}_v) \\
&s.t. \ \boldsymbol{x}_v \in \boldsymbol{X}_v(\boldsymbol{x}_{-v},\boldsymbol{u}_v),
  \end{aligned}
\end{equation}
\noindent where $f_v:\mathcal{R}^{n \times n_v'' \times n_v'} \mapsto \mathcal{R}$, with $n:=\sum_{v=1}^{N}n_v$, represents the cost function of player $v$. The feasible set $\boldsymbol{X}_v$ depends on the decision variables of other players and the signal. In the following, we impose the assumption that each player's optimization problem is convex.
\begin{ass} \label{convex}
	For every player $v \in \mathcal{N}$, $\boldsymbol{u}_v \in \boldsymbol{U}_v$, $\boldsymbol{\theta}_v \in \boldsymbol{\Theta}_v$ and $\boldsymbol{x}_{-v} \in \boldsymbol{X}_{-v}$, the cost function $f_v(\cdot;\boldsymbol{x}_{-v},\boldsymbol{u}_v,\boldsymbol{\theta}_v)$ is convex and continuously differentiable in $\boldsymbol{X}_v(\boldsymbol{x}_{-v},\boldsymbol{u}_v)$. Moreover, the sets $X_v(\boldsymbol{x}_{-v},\boldsymbol{u}_v)$ and $\boldsymbol{\Theta}_v$ are closed and convex.
\end{ass}

For the sake of simplicity,   variables pertaining to all players are stacked and denoted as $\boldsymbol{x}=\mathrm{col}(\boldsymbol{x}_1,\cdots,\boldsymbol{x}_N)$, $\boldsymbol{u}=\mathrm{col}(\boldsymbol{u}_1,\cdots,\boldsymbol{u}_N) \in \boldsymbol{U}$ and $\boldsymbol{\theta} = \mathrm{col}(\boldsymbol{\theta}_1,\cdots,\boldsymbol{\theta}_N) \in \boldsymbol{\Theta}$. Here, $\boldsymbol{\Theta}$ is a closed and convex set from Assumption \ref{convex}. The generalized Nash equilibrium stands as a crucial solution concept within the GNCG framework, and its formal definition is as follows.
\begin{defn} \label{gne}
Given $\boldsymbol{\theta} \in \boldsymbol{\Theta}$ and $\boldsymbol{u} \in \boldsymbol{U}$, let $\boldsymbol{x}^* = \mathrm{col}(\boldsymbol{x}_1^*, \cdots,\boldsymbol{x}_N^*)$ be a generalized Nash equilibrium (GNE) if, for every player $v \in \mathcal{N}$, the following condition holds:
\begin{equation*}
  \begin{aligned}
    f_v(\boldsymbol{x}_v^*;\boldsymbol{x}_{-v}^*,\boldsymbol{u}_v,\boldsymbol{\theta}_v) \leq f_v(\boldsymbol{x}_v;\boldsymbol{x}_{-v}^*,\boldsymbol{u}_v,\boldsymbol{\theta}_v),\ \forall \boldsymbol{x}_{v} \in X_v(\boldsymbol{x}_{-v}^*,\boldsymbol{u}_v).
  \end{aligned}
\end{equation*}
\end{defn}

GNE refers to a scenario in the GNCG where each player's current strategy minimizes  its cost function, while the strategies of other players remain unchanged. According to Theorem 4.1 in \cite{fran}, the existence of a GNE, as defined by Definition \ref{gne}, is guaranteed under Assumption \ref{convex}.

One extensively studied coupling constraint set within the GNCG framework is the jointly convex GNCG \cite[]{gnep6} shown below.
\begin{ass} \label{jointly}
The GNCG is jointly convex in this article, i.e.,  there exists a closed convex set $\boldsymbol{X}(\boldsymbol{u}) \subseteq \mathcal{R}^{n}$ associated with the signal $\boldsymbol{u}$ such that for each player $v$,
\[X_v(\boldsymbol{x}_{-v},\boldsymbol{u}_v) = \{ \boldsymbol{x}_v | (\boldsymbol{x}_v,\boldsymbol{x}_{-v}) \in \boldsymbol{X}(\boldsymbol{u}) \}.\]
\noindent Additionally, $X(\boldsymbol{u})$ is given by $\{ \boldsymbol{x} \ | \ h_q(\boldsymbol{x},\boldsymbol{u}) \le 0, 1 \leq q \leq m ; g_j(\boldsymbol{x},\boldsymbol{u}) = 0, 1 \leq j \leq p \}$, where $m$ and $n$ are the number of inequality and equality constraints, respectively. Moreover, let $h_q(\cdot,\boldsymbol{u})$ and $g_j(\cdot,\boldsymbol{u})$ be continuously differentiable functions.
\end{ass}

\subsection{Online Parameter Identification of GNCG based on Noisy Equilibrium  Observation}
The computation of equilibrium in  games typically necessitates knowledge of the cost functions for all players \cite[]{fran}. However, in real-life scenarios, we may not have access to the players' cost functions directly. Instead, we can observe the equilibrium solutions of these problems \cite[]{fangfei}. Let $\boldsymbol{y}$ represent the observed equilibrium of a game problem \eqref{q}, for which the cost functions are parameterized by an unknown parameter $\boldsymbol{\theta}$. Our objective is to estimate the unknown parameter $\boldsymbol{\theta}$   based on the observed equilibrium $\boldsymbol{y}$. This process is known as parameter identification in the game. In particular, we consider a scenario where the observed equilibrium is subject to noise, i.e., $\boldsymbol{y}=\boldsymbol{x} + \boldsymbol{\epsilon}$, where $\boldsymbol{x}$ denotes the true equilibrium, while $\boldsymbol{\epsilon}$ represents a random variable following a specific distribution.

 We focus on the scenario where data is sequentially observed, namely, in the $k$-th round, we have access to a signal $\boldsymbol{u}^k$ and a noise-corrupted equilibrium $\boldsymbol{y}^k$. Let $\boldsymbol{\theta}^1$ denote an initial estimate of the unknown parameters, and $\boldsymbol{\theta}^{k+1} \in \Theta$ represent  the estimate in the $k$-th round. Upon obtaining new observed data $(\boldsymbol{y}^k, \boldsymbol{u}^k)$ in the $k$-th round, we can update the estimate $\boldsymbol{\theta}^{k+1} \in \Theta$ using an online learning algorithm guided by a well-designed loss function $l(\boldsymbol{\theta};\boldsymbol{y}^k, \boldsymbol{u}^k)$. The performance evaluation of such an online parameter identification algorithm of GNCG is assessed by the regret defined as follows.
\begin{defn} \label{regretdefi}
(Regret of online parameter identification.)
\begin{equation} \label{regret}
\begin{aligned}
R_K := \sum_{k=1}^{K} l(\boldsymbol{\theta}^k;\boldsymbol{y}^k, \boldsymbol{u}^k) - \sum_{k=1}^{K} l(\boldsymbol{\theta}_*^K;\boldsymbol{y}^k, \boldsymbol{u}^k),
\end{aligned}
\end{equation}
\noindent where $\boldsymbol{\theta}_*^K$ represents the optimal inference within $\boldsymbol{\Theta}$ that minimizes $\sum_{k=1}^{K}l(\boldsymbol{\theta};\boldsymbol{y}^k,\boldsymbol{u}^k)$ in the whole batch setting.
\end{defn}

The regret can reflect the performance of an online parameter identification algorithm by comparing the cumulative loss function with that in the whole batch learning. An online learning algorithm is said to have the no-regret property if $R_K = o(K)$.
\section{Online Parameter Identification Algorithm of GNCG } \label{s3}
In this section, we first define the loss function based on Karush-Kuhn-Tucker (KKT) conditions of the GNE. Subsequently, we design an online parameter identification.
\subsection{Loss Function based on KKT Conditions}
Typically, a GNCG can have multiple GNEs. In practice, a specific class of economically meaningful GNEs, called variational equilibriums, has been formulated. These equilibriums represent a refined subset of the GNE \cite[]{Kulkarni}. Therefore, we assume that the observed equilibrium results belong to the class of variational equilibriums.
\begin{defn}\label{ve}
Let Assumptions \ref{convex}-2 hold. $\boldsymbol{x}^*$ qualifies as a variational equilibrium if, given $\boldsymbol{u} \in \boldsymbol{U}$ and $\boldsymbol{\theta} \in \boldsymbol{\Theta}$, it satisfies the variational inequality $VI(\boldsymbol{X}, \boldsymbol{F})$, i.e.,
\begin{align}\label{vi}
\boldsymbol{F}(\boldsymbol{x}^*,\boldsymbol{u},\boldsymbol{\theta})^T (\boldsymbol{y} - \boldsymbol{x}^*) \geq 0, \ \forall \boldsymbol{y} \in X(\boldsymbol{u}),
\end{align}
\noindent where $\boldsymbol{F}(\boldsymbol{x},\boldsymbol{u},\boldsymbol{\theta}) := (\nabla_{\boldsymbol{x}_v}f_v(\boldsymbol{x}_v;\boldsymbol{x}_{-v},\boldsymbol{u}_v,\boldsymbol{\theta}_v))_{v=1}^N$.
\end{defn}

With $X(\boldsymbol{u})$ defined in Assumptions  2, the KKT conditions corresponding to variational inequality (\ref{vi}) are   as follows.
\begin{equation}\label{kkt2}
\begin{cases}
\boldsymbol{F}(\boldsymbol{x},\boldsymbol{u},\boldsymbol{\theta}) +  \sum_{q=1}^m \lambda_q \nabla_{\boldsymbol{x}} h_q(\boldsymbol{x},\boldsymbol{u}) + \sum_{j=1}^p \nu_j \nabla_{\boldsymbol{x}} g_j(\boldsymbol{x},\boldsymbol{u}) = \boldsymbol{0} \\
0 \le \lambda_q \perp h_q(\boldsymbol{x},\boldsymbol{u}) \le 0 ,\ \forall q, 1 \leq q \leq m \\
g_j(\boldsymbol{x},\boldsymbol{u}) = 0, \ \forall j, 1 \leq j \leq p,
\end{cases}
\end{equation}
where $\lambda_q, q=1,\cdots,m$ and $\nu_j, j=1,\cdots,p$ are the dual variables corresponding to the inequality and equality constraints, respectively.
Denote by $\boldsymbol{\lambda}=\mathrm{col}(\lambda_1,\cdots, \lambda_m)$ and $\boldsymbol{\nu}=\mathrm{col}(\nu_1,\cdots,\nu_p) .$
Subsequently, we define the loss function $l\left(\boldsymbol{\theta};\boldsymbol{y}, \boldsymbol{u}\right)$ based on KKT system (\ref{kkt2}) as follows.

\begin{defn}\label{loss-defi2}
(Loss function.) For a given signal $\boldsymbol{u} \in \boldsymbol{U}$ and the corresponding observed equilibrium $\boldsymbol{y}$, set $l\left(\boldsymbol{\theta};\boldsymbol{y}, \boldsymbol{u}\right):=\min_{\boldsymbol{\lambda}, \boldsymbol{\nu}} \left\{L\left(\boldsymbol{\theta}, \boldsymbol{\lambda}, \boldsymbol{\nu};\boldsymbol{y}, \boldsymbol{u}\right) \ | \ \boldsymbol{\lambda} \geq 0 \right\},$
where
\begin{align*}
 L&\left(\boldsymbol{\theta},\boldsymbol{\lambda}, \boldsymbol{\nu};\boldsymbol{y}, \boldsymbol{u}\right)=   \sum_{q=1}^{m} \chi \left(\lambda_q h_q(\boldsymbol{y},\boldsymbol{u})\right) + \sum_{j=1}^{p} \chi \left(g_j(\boldsymbol{y},\boldsymbol{u})\right)\\
 &+ \chi \biggl( \boldsymbol{F}(\boldsymbol{y},\boldsymbol{u},\boldsymbol{\theta}) +  \sum_{q=1}^m \lambda_q \nabla_{\boldsymbol{y}} h_q(\boldsymbol{y},\boldsymbol{u}) + \sum_{j=1}^p \nu_j \nabla_{\boldsymbol{y}} g_j(\boldsymbol{y},\boldsymbol{u})\biggr),
\end{align*}
\noindent and $\chi(\cdot):\mathcal{R}^n \mapsto R_{\geq0}$ is a penalty function with $\chi(\boldsymbol{0})=0$.
\end{defn}

\subsection{Online Parameter Identification Algorithm}
In the $k$-th round, we obtain a new observation $(\boldsymbol{y}^k, \boldsymbol{u}^k)$ and possess the estimate $\boldsymbol{\theta}^{k}$ from the previous $(k-1)$-round. We update the parameter using the learning method proposed in \cite{Kulis2010}, which strikes a balance between conservativeness and correctiveness. This implies that the new estimate $\boldsymbol{\theta}^{k+1}$ should align with the new observation while preserving some consistency with the previous estimate $\boldsymbol{\theta}^{k}$. As a result, the regularized loss function $G_k(\boldsymbol{\theta})$ can be defined as follows.
\begin{align} \label{update1}
	G_k(\boldsymbol{\theta}) = \underbrace{D\left(\boldsymbol{\theta},\boldsymbol{\theta}^k\right)}_{\mathrm{conservativeness}} + \mu _k \underbrace{l\left(\boldsymbol{\theta};\boldsymbol{y}^k, \boldsymbol{u}^k\right)}_{\mathrm{correctiveness}},
\end{align}
\noindent where $D(\cdot,\cdot): \mathcal{R}^{n'\times n'} \mapsto R_{\geq0}$ ($n'=\sum_{v=1}^{N}n_v'$) denotes a distance function, $\mu_k$ is the learning rate, and $l(\boldsymbol{\theta};\boldsymbol{y}^k, \boldsymbol{u}^k)$ is defined in Definition \ref{loss-defi2}. The first term of \eqref{update1} captures the `conservativeness' by evaluating the distance between $\boldsymbol{\theta}$ and $\boldsymbol{\theta}^k$, while the second term measures the `correctiveness' by assessing the concordance between $\boldsymbol{\theta}$ and the new observation $(\boldsymbol{y}^k, \boldsymbol{u}^k)$. The learning rate $\mu_k$ effectively balances these two aspects.

The parameter update involves two steps. Firstly, we get the optimal inference $\tilde{\boldsymbol{\theta}}^{k+1}$ by minimizing the regularized loss function. Secondly, we find a point $\boldsymbol{\theta}^{k+1}$ within $\boldsymbol{\Theta}$ that has the minimum distance from $\tilde{\boldsymbol{\theta}}^{k+1}$. The two steps are as follows:
\begin{subequations}
\begin{align} \label{update_phi}
\tilde{\boldsymbol{\theta}}^{k+1} &= \arg \min _{\boldsymbol{\theta}} G_k(\boldsymbol{\theta}), \\
\boldsymbol{\theta}^{k+1} &= \arg \min_{\boldsymbol{\theta} \in \boldsymbol{\Theta}} D\left(\boldsymbol{\theta} , \tilde{\boldsymbol{\theta}}^{k+1}\right).
\end{align}
\end{subequations}
The key computational cost of the update process lies in solving the optimization problem (\ref{update_phi}). As a matter of fact, with the definition $l\left(\boldsymbol{\theta};\boldsymbol{y} , \boldsymbol{u} \right)$ of in Definition \ref{loss-defi2}, we need to solve the following problem.
\begin{equation}
\begin{aligned}
&\min_{\boldsymbol{\theta}} \  D\left(\boldsymbol{\theta},\boldsymbol{\theta}^k\right) + \mu _k l\left(\boldsymbol{\theta};\boldsymbol{y}^k, \boldsymbol{u}^k\right) \\
=&\min_{\boldsymbol{\theta},\boldsymbol{\lambda}\geq0,\boldsymbol{\nu}} \  D\left(\boldsymbol{\theta},\boldsymbol{\theta}^k\right) + \mu _k L\left(\boldsymbol{\theta},\boldsymbol{\lambda}, \boldsymbol{\nu};\boldsymbol{y}^k, \boldsymbol{u}^k\right).
\end{aligned}
\end{equation}
As such, the parameter update step (\ref{update_phi}) is converted as
\begin{equation}
\begin{aligned} \label{update}
\left(\tilde{\boldsymbol{\theta}}^{k+1} , \boldsymbol{\lambda}^{k}, \boldsymbol{\nu}^{k}\right) = \arg \min_{\boldsymbol{\theta},\boldsymbol{\lambda}\geq0,\boldsymbol{\nu}} D\left(\boldsymbol{\theta},\boldsymbol{\theta}^k\right) + \mu _k L\left(\boldsymbol{\theta},\boldsymbol{\lambda}, \boldsymbol{\nu};\boldsymbol{y}^k, \boldsymbol{u}^k\right).
\end{aligned}
\end{equation}
Therefore, we summarize the procedures of the online parameter identification of GNCG   in Algorithm \ref{alg1}.
\begin{algorithm}\caption{Online parameter identification algorithm of GNCG}\label{alg1}
	\textbf{Require:} $\theta^1 \in \Theta$ \\
	1: Let $k \leftarrow 1$  \\
	2: \textbf{While} $k \leq K$ \textbf{do}  \\
	3: \quad  Observe $(\boldsymbol{y}^k, \boldsymbol{u}^k)$ \\
	4: \quad  Solve $\min_{\boldsymbol{\theta},\boldsymbol{\lambda}\geq0,\boldsymbol{\nu}} \  D\left(\boldsymbol{\theta},\boldsymbol{\theta}^k\right) + \mu _k L\left(\boldsymbol{\theta},\boldsymbol{\lambda}, \boldsymbol{\nu};\boldsymbol{y}^k, \boldsymbol{u}^k\right)$ to get $\tilde{\boldsymbol{\theta}}^{k+1}$ \\
	5: \quad  Solve $\min_{\boldsymbol{\theta} \in \boldsymbol{\Theta}} D\left(\boldsymbol{\theta} , \tilde{\boldsymbol{\theta}}^{k+1}\right)$ to get $\boldsymbol{\theta}^{k+1}$ \\
	6: \quad $k \leftarrow k+1$  \\
	7:\textbf{end}  	
\end{algorithm}

Generally speaking, the problem \eqref{update} might be nonconvex and cannot be exactly solved. While in this work, we will impose some suitable conditions on the problem, like the penalty and distance functions are convex, and \(\boldsymbol{\theta}\) is linearly structured in the cost functions, and which are shown in details in the next section. Because the addition and composition with an affine function both preserve convexity \cite{calafiore2014optimization}, the problem \eqref{update} is convex under those conditions. Therefore, step 4 can be solved by calling the well-known iterative optimization algorithms, like Trust-Region Constrained Algorithm and Sequential Quadratic Programming \cite{nocedal1999numerical}.

\section{Theoretical Analysis} \label{s4}
In this section, we will show that under certain conditions, such as taking  penalty and distance functions as the square of the $\mathcal{L}_2$-norm,   the cost function of each player $v\in \mathcal{N}$ is linear in the unknown parameter $\boldsymbol{\theta}_v$, and setting the learning rate as $\mu_k \propto 1/\sqrt{k}$, the regret bound produced by Algorithm \ref{alg1} is $O(\sqrt{K})$.
\subsection{Assumptions and Preliminary Results}
For the simplicity, we define
\begin{equation}
\begin{aligned} \label{notation}
&\boldsymbol{F}_{\boldsymbol{\theta}}:=\boldsymbol{F}\left(\boldsymbol{y}, \boldsymbol{u}, \boldsymbol{\theta}\right) \in \mathcal{R}^{n},\\
&\nabla \boldsymbol{h}:=\left[\nabla_{\boldsymbol{y}} h_1(\boldsymbol{y},\boldsymbol{u}),\cdots,\nabla_{\boldsymbol{y}} h_m(\boldsymbol{y},\boldsymbol{u})\right] \in \mathcal{R}^{n \times m} ,\\
&\nabla \boldsymbol{g}:=\left[\nabla_{\boldsymbol{y}} g_1(\boldsymbol{y},\boldsymbol{u}),\cdots,\nabla_{\boldsymbol{y}} g_p(\boldsymbol{y},\boldsymbol{u})\right] \in \mathcal{R}^{n \times p}, \\
&\boldsymbol{H}:=\mathrm{diag}(h_1(\boldsymbol{y},\boldsymbol{u}),\cdots,h_m(\boldsymbol{y},\boldsymbol{u}))\in \mathcal{R}^{m\times m} ,\\
&\boldsymbol{g}:=\left[g_1(\boldsymbol{y},\boldsymbol{u}),\cdots,g_p(\boldsymbol{y},\boldsymbol{u})\right]^T \in \mathcal{R}^{p} ,\\
&\boldsymbol{\lambda} := (\lambda_1,\cdots,\lambda_m)^T \in \mathcal{R}^{m}, \boldsymbol{\nu} := (\nu_1,\cdots,\nu_p)^T \in \mathcal{R}^{p}.
\end{aligned}
\end{equation}

Next, we present an assumption regarding the linearity of the players' cost functions in the game with respect to the unknown parameters $\boldsymbol{\theta}$. While this assumption may appear stringent, it can be justified in scenarios involving linear weights of known basis functions, as exemplified by the cost function of game-theoretic autonomous vehicles in \cite{li2017}.
\begin{ass}\label{assu-1}
For every $v \in \mathcal{N}$, $f_v(\boldsymbol{x}_v;\boldsymbol{x}_{-v},\boldsymbol{u}_v,\boldsymbol{\theta}_v)$ is linear with respect to $\boldsymbol{\theta}_v$, i.e., there exists basic functions $\tilde{\boldsymbol{f}_v}(\boldsymbol{x}_v;\boldsymbol{x}_{-v},\boldsymbol{u}_v):\mathcal{R}^{n \times n_v''}\mapsto \mathcal{R}^{n_v'}$ and $\breve{f_v}(\boldsymbol{x}_v;\boldsymbol{x}_{-v},\boldsymbol{u}_v):\mathcal{R}^{n \times n_v''}\mapsto \mathcal{R}$, which do not depend on the parameter $\boldsymbol{\theta}$ (Abbreviated as $f_v$, $\tilde{\boldsymbol{f}_v}$, $\breve{f_v}$ in the following), such that $f_v = \boldsymbol{\theta}_v^T \tilde{\boldsymbol{f}_v} + \breve{f_v}$.
\end{ass}

Next, we introduce an assumption regarding the boundedness of certain variables.
\begin{ass} \label{assu-2}
(1) $\boldsymbol{y}$, $\boldsymbol{u}$ and $\boldsymbol{\theta}$  are bounded, i.e., there exists a real number $B_1 > 0$, such that for every $\boldsymbol{u} \in \boldsymbol{U}, \boldsymbol{\theta} \in  \boldsymbol{\Theta}$ and $\boldsymbol{y}$, we have $\left\|\boldsymbol{u}\right\|_2$, $\left\|\boldsymbol{y}\right\|_2$ and $\left\|\boldsymbol{\theta}\right\|_2 <B_1$. (2) Furthermore, gradients of the basic cost functions, constraint functions, and their gradients are uniformly bounded, i.e., there exists a real number $B_2>0$, such that for every $v \in \mathcal{N}$, $ \boldsymbol{u} \in \boldsymbol{U} , \boldsymbol{y}$ and $\boldsymbol{\theta} \in \boldsymbol{\Theta}$, we have $\left\|\nabla_{\boldsymbol{x}_v}\tilde{\boldsymbol{f}_v}\right\|_2$, $\left\|\nabla_{\boldsymbol{x}_v}\breve{f_v}\right\|$, $\left\| \boldsymbol{H}\right\|_2$, $\left\|\nabla \boldsymbol{h} \right\|_2$ and $\left\|\nabla \boldsymbol{g}\right\|_2<B_2$.
\end{ass}

Building upon Assumptions \ref{assu-1} and \ref{assu-2}, we can establish the boundedness and Lipschitz continuity of $\boldsymbol{F}_{\boldsymbol{\theta}}$ defined by \eqref{notation}.
\begin{pro} \label{coro-1}
	Let Assumptions \ref{assu-1} and \ref{assu-2} hold. Then there exists constants $M_1, M_2>0$, such that for every $\boldsymbol{u} \in \boldsymbol{U}$ and $\boldsymbol{y}$,
	\begin{align} \label{bound_F}
	&\left\| \boldsymbol{F}_{\boldsymbol{\theta}} \right\|_2 \leq M_1, \forall \boldsymbol{\theta} \in  \boldsymbol{\Theta}, \\ \label{lip_F}
	&\left\| \boldsymbol{F}_{\boldsymbol{\theta}_1} - \boldsymbol{F}_{\boldsymbol{\theta}_2} \right\|_2 \leq M_2 \left\|\boldsymbol{\theta}_1 - \boldsymbol{\theta}_2 \right\|_2, \forall \boldsymbol{\theta}_1, \boldsymbol{\theta}_2 \in  \boldsymbol{\Theta}.
	\end{align}
\end{pro}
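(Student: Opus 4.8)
The plan is to establish the two bounds \eqref{bound_F} and \eqref{lip_F} by exploiting the special structure coming from Assumption \ref{assu-1}, namely that $f_v = \boldsymbol{\theta}_v^T \tilde{\boldsymbol{f}}_v + \breve{f}_v$. The first observation is that since $\tilde{\boldsymbol{f}}_v$ and $\breve{f}_v$ do not depend on $\boldsymbol{\theta}$, the pseudo-gradient component satisfies
\[
\nabla_{\boldsymbol{x}_v} f_v(\boldsymbol{x}_v;\boldsymbol{x}_{-v},\boldsymbol{u}_v,\boldsymbol{\theta}_v) = (\nabla_{\boldsymbol{x}_v}\tilde{\boldsymbol{f}}_v)\,\boldsymbol{\theta}_v + \nabla_{\boldsymbol{x}_v}\breve{f}_v,
\]
so that $\boldsymbol{F}_{\boldsymbol{\theta}}$ is an \emph{affine} function of $\boldsymbol{\theta}$ with coefficient matrices built from the basic-function gradients evaluated at $(\boldsymbol{y},\boldsymbol{u})$. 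Here one must be slightly careful: $\nabla_{\boldsymbol{x}_v}\tilde{\boldsymbol{f}}_v$ is an $n_v \times n_v'$ matrix, so $\boldsymbol{F}_{\boldsymbol{\theta}}$ stacks the blocks $(\nabla_{\boldsymbol{y}_v}\tilde{\boldsymbol{f}}_v)\,\boldsymbol{\theta}_v + \nabla_{\boldsymbol{y}_v}\breve{f}_v$ over $v$, where the derivatives are now taken at the observed point $\boldsymbol{y}$.

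\textbf{Bound \eqref{bound_F}.} Using the triangle inequality on the stacked vector and the submultiplicativity of the spectral norm,
\[
\|\boldsymbol{F}_{\boldsymbol{\theta}}\|_2 \le \Bigl(\textstyle\sum_{v=1}^N \|\nabla_{\boldsymbol{y}_v}\tilde{\boldsymbol{f}}_v\|_2^2\,\|\boldsymbol{\theta}_v\|_2^2\Bigr)^{1/2} + \Bigl(\textstyle\sum_{v=1}^N \|\nabla_{\boldsymbol{y}_v}\breve{f}_v\|_2^2\Bigr)^{1/2},
\]
and then invoking Assumption \ref{assu-2}: each $\|\nabla_{\boldsymbol{y}_v}\tilde{\boldsymbol{f}}_v\|_2 < B_2$, each $\|\nabla_{\boldsymbol{y}_v}\breve{f}_v\| < B_2$, and $\|\boldsymbol{\theta}\|_2 < B_1$ hence $\|\boldsymbol{\theta}_v\|_2 < B_1$. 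This yields $\|\boldsymbol{F}_{\boldsymbol{\theta}}\|_2 \le B_2 B_1\sqrt{N} + B_2\sqrt{N} =: M_1$, uniformly in $\boldsymbol{\theta}\in\boldsymbol{\Theta}$, $\boldsymbol{u}\in\boldsymbol{U}$ and $\boldsymbol{y}$.

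\textbf{Bound \eqref{lip_F}.} The affine structure makes the difference independent of the constant terms $\nabla_{\boldsymbol{y}_v}\breve{f}_v$: block-wise, $\boldsymbol{F}_{\boldsymbol{\theta}_1} - \boldsymbol{F}_{\boldsymbol{\theta}_2}$ has $v$-th block $(\nabla_{\boldsymbol{y}_v}\tilde{\boldsymbol{f}}_v)(\boldsymbol{\theta}_{1,v} - \boldsymbol{\theta}_{2,v})$. Then
\[
\|\boldsymbol{F}_{\boldsymbol{\theta}_1} - \boldsymbol{F}_{\boldsymbol{\theta}_2}\|_2 \le \Bigl(\textstyle\sum_{v=1}^N \|\nabla_{\boldsymbol{y}_v}\tilde{\boldsymbol{f}}_v\|_2^2\,\|\boldsymbol{\theta}_{1,v}-\boldsymbol{\theta}_{2,v}\|_2^2\Bigr)^{1/2} \le B_2\,\|\boldsymbol{\theta}_1 - \boldsymbol{\theta}_2\|_2,
\]
so $M_2 := B_2$ works. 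I would remark that, strictly speaking, since $\boldsymbol{F}$ is globally affine in $\boldsymbol{\theta}$ the Lipschitz property actually holds on all of $\mathcal{R}^{n'}$, not just $\boldsymbol{\Theta}$, but restricting to $\boldsymbol{\Theta}$ is all that is needed downstream.

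\textbf{Main obstacle.} There is no deep obstacle here — the result is essentially a bookkeeping exercise. The one genuine subtlety is matching the notation in Assumption \ref{assu-2}, which bounds $\|\nabla_{\boldsymbol{x}_v}\tilde{\boldsymbol{f}}_v\|_2$ (gradient of a vector-valued map, i.e.\ a Jacobian) uniformly over $\boldsymbol{y}$; one must confirm that this bound is stated at the observation point $\boldsymbol{y}$ (not the true equilibrium $\boldsymbol{x}$), which it is, since Assumption \ref{assu-2}(2) quantifies over all $\boldsymbol{y}$. A second minor point is being explicit about how the per-player blocks assemble into the stacked $\boldsymbol{F}_{\boldsymbol{\theta}}\in\mathcal{R}^n$ and keeping the $\sqrt{N}$ factor correct; I would fold this into the definitions of $M_1$ and $M_2$ without belaboring it.
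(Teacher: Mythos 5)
Your proof is correct and follows essentially the same route as the paper: both arguments write $\boldsymbol{F}_{\boldsymbol{\theta}}$ as an affine (block-structured) function of $\boldsymbol{\theta}$ using Assumption \ref{assu-1} and then invoke the uniform bounds of Assumption \ref{assu-2} to extract $M_1$ and $M_2$. The only difference is that you make the constants explicit (with a harmless over-estimate of $\sqrt{N}$ in the first term of $M_1$), whereas the paper merely asserts their existence.
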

\begin{proof}
From the linearity of $f_v$ with respect to $\boldsymbol{\theta}_v$, we can obtain the structure of $\boldsymbol{F}_{\boldsymbol{\theta}}$ as follows.
\begin{equation} \label{F_theta}
	\begin{aligned}
	\boldsymbol{F}_{\theta}=\begin{bmatrix}
				\nabla_{\boldsymbol{x}_1} \tilde{\boldsymbol{f}_1} & & \\
				& \ddots & \\
				& & \nabla_{\boldsymbol{x}_N} \tilde{\boldsymbol{f}_N}
		  \end{bmatrix}^T\boldsymbol{\theta} + \begin{bmatrix}
			  \nabla_{\boldsymbol{x}_1} \breve{f_1} \\
			  \vdots \\
			  \nabla_{\boldsymbol{x}_N} \breve{f_N}
		  \end{bmatrix},
	\end{aligned}
\end{equation}
\noindent where $\boldsymbol{\theta} = \mathrm{col}(\boldsymbol{\theta_1},\cdots,\boldsymbol{\theta_N})$.
Because of the boundedness of $\|\boldsymbol{\theta}\|_2$, $\left\|\nabla_{\boldsymbol{x}_v}\tilde{\boldsymbol{f}_v}\right\|_2$ and $\left\|\nabla_{\boldsymbol{x}_v}\breve{f_v}\right\|_2, v \in \mathcal{N}$ from Assumption \ref{assu-2}, according to the structure of $\boldsymbol{F}_{\boldsymbol{\theta}}$ in (\ref{F_theta}), there exists a positive constant $M_1$ such that  (\ref{bound_F}) holds.

From (\ref{F_theta}), we have that for every $\boldsymbol{\theta}_1, \boldsymbol{\theta}_2 \in  \boldsymbol{\Theta}$,
\[\boldsymbol{F}_{\boldsymbol{\theta}_1} - \boldsymbol{F}_{\boldsymbol{\theta}_2} =  \begin{bmatrix}
	\nabla_{\boldsymbol{x}_1}\tilde{\boldsymbol{f}_1} & & \\
	  & \ddots & \\
	  & & \nabla_{\boldsymbol{x}_N} \tilde{\boldsymbol{f}_N}
\end{bmatrix}^T(\boldsymbol{\theta_1}-\boldsymbol{\theta_2}).\]
Since $\left\|\nabla _{\boldsymbol{x}_v} \tilde{\boldsymbol{f}_v}\right\|_2<B_2, v \in \mathcal{N}$ from Assumption \ref{assu-2}, (\ref{lip_F}) holds for a positive constant $M_2$.
\end{proof}

The following assumption provides concrete selections for the distance function and the loss function.
\begin{ass} \label{penalty}
 (1) The distance function is $D(\boldsymbol{x},\boldsymbol{y})=\frac{1}{2}\|\boldsymbol{x}-\boldsymbol{y}\|_2^2$;
	(2) The penalty function is the square of the $\mathcal{L}_2$-norm. Thus, the loss function defined in Definition \ref{loss-defi2}  is given by
		\begin{equation} \label{newll}
			\begin{aligned}
			l\left(\boldsymbol{\theta};\boldsymbol{y}, \boldsymbol{u}\right) = \min_{\boldsymbol{\lambda} \geq 0, \boldsymbol{\nu}} \{ \left\|\boldsymbol{F}_{\boldsymbol{\theta}} + \nabla \boldsymbol{h}\boldsymbol{\lambda} + \nabla \boldsymbol{g} \boldsymbol{\nu}\right\|_2^2 + \left\|\boldsymbol{H}\boldsymbol{\lambda}\right\|_2^2 + \|\boldsymbol{g}\|_2^2 \}.
			\end{aligned}
		\end{equation}
\end{ass}

An assumption concerning the linear independence between gradients of constraint functions is presented as follows.
\begin{ass}
	\label{assu-4}
	$\left[\nabla \boldsymbol{h}, \nabla \boldsymbol{g}\right] \in \mathcal{R}^{n \times (m+p)}$ is a column full rank matrix, where
$\nabla \boldsymbol{h}$ and $ \nabla \boldsymbol{g}$ are defined in \eqref{notation}.
\end{ass}

Assumption \ref{assu-4} can be satisfied in certain situations. For example, resource constraint in a market is $\boldsymbol{W}:=\left\{\boldsymbol{y'} | \boldsymbol{A'y'} \leq \boldsymbol{b'} \right\}$, where $\boldsymbol{A'} \in \mathcal{R}^{m \times n}$ is a row full rank matrix and $\boldsymbol{b'} \in \mathcal{R}^{m}$. This assumption ensures that the loss function $l\left(\boldsymbol{\theta};\boldsymbol{y}, \boldsymbol{u}\right)$ defined by \eqref{newll} is both Lipschitz continuous and convex in the unknown parameter $\boldsymbol{\theta}$ for every $\boldsymbol{u} \in \boldsymbol{U}$ and $\boldsymbol{y}$. Then we provide two propositions.
\begin{pro} \label{lema2}
Under Assumptions \ref{convex}-\ref{assu-4}, the loss function (\ref{newll}) is uniformly $C$-Lipschitz continuous in $\Theta$ for every $\boldsymbol{u} \in \boldsymbol{U}$ and $\boldsymbol{y}$, namely,
\begin{align*}
\left|l({\boldsymbol{\theta}_1};\boldsymbol{y},\boldsymbol{u}) - l({\boldsymbol{\theta}_2};\boldsymbol{y},\boldsymbol{u}) \right| \leq C \left\| \boldsymbol{\theta}_1 - \boldsymbol{\theta}_2 \right\| _2, \forall \boldsymbol{\theta}_1, \boldsymbol{\theta}_2 \in  \boldsymbol{\Theta}.
\end{align*}
\end{pro}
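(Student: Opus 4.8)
The plan is to reduce the desired Lipschitz estimate for the value function $l(\cdot;\boldsymbol{y},\boldsymbol{u})$ to a Lipschitz estimate for its integrand, after confining the inner minimization to a fixed bounded set. First I would observe that in \eqref{newll} the term $\|\boldsymbol{g}\|_2^2$ depends on neither $\boldsymbol{\theta}$ nor $(\boldsymbol{\lambda},\boldsymbol{\nu})$, hence it cancels in the difference $l(\boldsymbol{\theta}_1;\boldsymbol{y},\boldsymbol{u})-l(\boldsymbol{\theta}_2;\boldsymbol{y},\boldsymbol{u})$ and may be discarded. Writing $\phi(\boldsymbol{\theta},\boldsymbol{\lambda},\boldsymbol{\nu}):=\|\boldsymbol{F}_{\boldsymbol{\theta}}+\nabla\boldsymbol{h}\boldsymbol{\lambda}+\nabla\boldsymbol{g}\boldsymbol{\nu}\|_2^2+\|\boldsymbol{H}\boldsymbol{\lambda}\|_2^2$, we have $l(\boldsymbol{\theta};\boldsymbol{y},\boldsymbol{u})=\|\boldsymbol{g}\|_2^2+\min_{\boldsymbol{\lambda}\geq0,\boldsymbol{\nu}}\phi(\boldsymbol{\theta},\boldsymbol{\lambda},\boldsymbol{\nu})$, and since $[\nabla\boldsymbol{h},\nabla\boldsymbol{g}]$ has full column rank (Assumption \ref{assu-4}) the map $\phi(\boldsymbol{\theta},\cdot,\cdot)$ is coercive, so the inner minimum is attained.

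The key step is to show the inner minimizer lies in a ball whose radius is \emph{uniform} in $\boldsymbol{\theta},\boldsymbol{y},\boldsymbol{u}$. Evaluating at $(\boldsymbol{\lambda},\boldsymbol{\nu})=(\boldsymbol{0},\boldsymbol{0})$ gives $\min_{\boldsymbol{\lambda}\geq0,\boldsymbol{\nu}}\phi(\boldsymbol{\theta},\boldsymbol{\lambda},\boldsymbol{\nu})\leq\|\boldsymbol{F}_{\boldsymbol{\theta}}\|_2^2\leq M_1^2$ by \eqref{bound_F}. Hence any minimizer $(\boldsymbol{\lambda},\boldsymbol{\nu})$ satisfies $\|\boldsymbol{F}_{\boldsymbol{\theta}}+\nabla\boldsymbol{h}\boldsymbol{\lambda}+\nabla\boldsymbol{g}\boldsymbol{\nu}\|_2\leq M_1$, so $\|\nabla\boldsymbol{h}\boldsymbol{\lambda}+\nabla\boldsymbol{g}\boldsymbol{\nu}\|_2\leq 2M_1$. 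Letting $\sigma>0$ be a uniform lower bound for the smallest singular value of $[\nabla\boldsymbol{h},\nabla\boldsymbol{g}]$, Assumption \ref{assu-4} yields $\|(\boldsymbol{\lambda};\boldsymbol{\nu})\|_2\leq 2M_1/\sigma=:R$. Consequently $l(\boldsymbol{\theta};\boldsymbol{y},\boldsymbol{u})-\|\boldsymbol{g}\|_2^2=\min_{(\boldsymbol{\lambda},\boldsymbol{\nu})\in\mathcal{B}}\phi(\boldsymbol{\theta},\boldsymbol{\lambda},\boldsymbol{\nu})$, where $\mathcal{B}:=\{(\boldsymbol{\lambda},\boldsymbol{\nu}):\boldsymbol{\lambda}\geq0,\ \|(\boldsymbol{\lambda};\boldsymbol{\nu})\|_2\leq R\}$ is compact and independent of $\boldsymbol{\theta},\boldsymbol{y},\boldsymbol{u}$.

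Then I would use the elementary fact that $|\min_{z\in\mathcal{B}}\psi_1(z)-\min_{z\in\mathcal{B}}\psi_2(z)|\leq\sup_{z\in\mathcal{B}}|\psi_1(z)-\psi_2(z)|$, applied with $\psi_i=\phi(\boldsymbol{\theta}_i,\cdot,\cdot)$, to get $|l(\boldsymbol{\theta}_1;\boldsymbol{y},\boldsymbol{u})-l(\boldsymbol{\theta}_2;\boldsymbol{y},\boldsymbol{u})|\leq\sup_{(\boldsymbol{\lambda},\boldsymbol{\nu})\in\mathcal{B}}|\phi(\boldsymbol{\theta}_1,\boldsymbol{\lambda},\boldsymbol{\nu})-\phi(\boldsymbol{\theta}_2,\boldsymbol{\lambda},\boldsymbol{\nu})|$. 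Only the first summand of $\phi$ depends on $\boldsymbol{\theta}$; setting $\boldsymbol{a}_i:=\boldsymbol{F}_{\boldsymbol{\theta}_i}+\nabla\boldsymbol{h}\boldsymbol{\lambda}+\nabla\boldsymbol{g}\boldsymbol{\nu}$ and using $|\,\|\boldsymbol{a}_1\|_2^2-\|\boldsymbol{a}_2\|_2^2\,|\leq\|\boldsymbol{a}_1-\boldsymbol{a}_2\|_2(\|\boldsymbol{a}_1\|_2+\|\boldsymbol{a}_2\|_2)$ together with $\boldsymbol{a}_1-\boldsymbol{a}_2=\boldsymbol{F}_{\boldsymbol{\theta}_1}-\boldsymbol{F}_{\boldsymbol{\theta}_2}$, \eqref{lip_F}, \eqref{bound_F} and Assumption \ref{assu-2} (which give $\|\boldsymbol{a}_i\|_2\leq M_1+2B_2R$ on $\mathcal{B}$), collecting terms yields the claim with $C:=2M_2(M_1+2B_2R)$.

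The main obstacle is the uniformity invoked in the second step: Assumption \ref{assu-4} only asserts full column rank of $[\nabla\boldsymbol{h},\nabla\boldsymbol{g}]$ for each admissible $(\boldsymbol{y},\boldsymbol{u})$, so producing a single $\sigma>0$ (hence a single radius $R$) valid for all of them requires an additional argument. This is where one uses boundedness of the signal set and of the range of the observed equilibria from Assumption \ref{assu-2}, closedness (so compactness), and continuity of the constraint gradients from Assumption \ref{jointly}: the smallest singular value is then a continuous positive function on a compact set and is therefore bounded away from zero. The remaining manipulations are routine.
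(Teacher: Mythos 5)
Your proposal is correct, and it takes a genuinely different and in fact more elementary route than the paper. The paper's proof goes through Lemma~\ref{lem-1}: it analyzes the KKT system of the inner quadratic program \eqref{op_pro}, uses the positive definiteness of the matrix $\boldsymbol{A}$ (which follows from Assumption~\ref{assu-4}) to show that the optimal multipliers $(\boldsymbol{\lambda}^{\ast},\boldsymbol{\nu}^{\ast})$ are both uniformly bounded and Lipschitz continuous as functions of $\boldsymbol{\theta}$, and then expands $l(\boldsymbol{\theta}_1;\boldsymbol{y},\boldsymbol{u})-l(\boldsymbol{\theta}_2;\boldsymbol{y},\boldsymbol{u})=L(\boldsymbol{\theta}_1,\boldsymbol{\lambda}^{\ast},\boldsymbol{\nu}^{\ast};\cdot)-L(\boldsymbol{\theta}_2,\boldsymbol{\lambda}^{\#},\boldsymbol{\nu}^{\#};\cdot)$ via Cauchy--Schwarz, which forces it to control $\|\boldsymbol{\lambda}^{\ast}-\boldsymbol{\lambda}^{\#}\|_2$ and $\|\boldsymbol{\nu}^{\ast}-\boldsymbol{\nu}^{\#}\|_2$. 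Your min-over-a-common-compact-set argument, $|\min_{\mathcal{B}}\psi_1-\min_{\mathcal{B}}\psi_2|\leq\sup_{\mathcal{B}}|\psi_1-\psi_2|$, sidesteps the perturbation analysis of the argmin entirely: you only need the minimizers to lie in a fixed ball, which you get cheaply by evaluating at $(\boldsymbol{\lambda},\boldsymbol{\nu})=(\boldsymbol{0},\boldsymbol{0})$ and inverting $[\nabla\boldsymbol{h},\nabla\boldsymbol{g}]$ on its range, and then only the $\boldsymbol{\theta}$-dependence of the integrand matters, which is handled by Proposition~\ref{coro-1}. What the paper's heavier machinery buys is the stronger structural statement that the optimal dual variables themselves vary Lipschitz-continuously with $\boldsymbol{\theta}$; since that fact is not used elsewhere in the paper, your argument would suffice for Proposition~\ref{lema2} as stated and would shorten Appendix~A considerably. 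On the uniformity caveat you raise: this is a real issue, but it is one your proof shares with the paper's, whose constants $a$ and $d$ involve $1/\lambda_{\min}(\boldsymbol{A})$ with $\boldsymbol{A}$ depending on $(\boldsymbol{y},\boldsymbol{u})$ and no uniform positive lower bound established; your $\sigma$ is essentially $\lambda_{\min}(\boldsymbol{A})^{1/2}$ up to the $\boldsymbol{H}^2$ term, so you are no worse off, and you are more candid about what an honest fix (compactness of the admissible $(\boldsymbol{y},\boldsymbol{u})$ together with the rank condition holding on the closure) would require.
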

\begin{proof}
  See Appendix A.
\end{proof}
\begin{pro}\label{conv}
Let Assumptions \ref{convex}-\ref{assu-4} hold. Then the loss function \eqref{newll} is convex in $\boldsymbol{\Theta}$ for every $\boldsymbol{u} \in \boldsymbol{U}$ and $\boldsymbol{y}$.
\end{pro}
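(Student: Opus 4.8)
The plan is to express $l(\boldsymbol{\theta};\boldsymbol{y},\boldsymbol{u})$ as the partial minimization, over a convex set, of a function that is \emph{jointly} convex in all of its arguments, and then invoke the fact that such a partial minimization returns a convex function. Fixing $\boldsymbol{u}\in\boldsymbol{U}$ and $\boldsymbol{y}$, set
\[
\Phi(\boldsymbol{\theta},\boldsymbol{\lambda},\boldsymbol{\nu}):=\|\boldsymbol{F}_{\boldsymbol{\theta}}+\nabla\boldsymbol{h}\,\boldsymbol{\lambda}+\nabla\boldsymbol{g}\,\boldsymbol{\nu}\|_2^2+\|\boldsymbol{H}\boldsymbol{\lambda}\|_2^2+\|\boldsymbol{g}\|_2^2 ,
\]
so that, by \eqref{newll}, $l(\boldsymbol{\theta};\boldsymbol{y},\boldsymbol{u})=\min_{\boldsymbol{\lambda}\geq 0,\,\boldsymbol{\nu}}\Phi(\boldsymbol{\theta},\boldsymbol{\lambda},\boldsymbol{\nu})$.

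First I would establish joint convexity of $\Phi$. By Assumption \ref{assu-1} and the explicit representation \eqref{F_theta}, the map $\boldsymbol{\theta}\mapsto\boldsymbol{F}_{\boldsymbol{\theta}}$ is affine; since $\nabla\boldsymbol{h}$, $\nabla\boldsymbol{g}$, $\boldsymbol{H}$ and $\boldsymbol{g}$ do not depend on $\boldsymbol{\theta}$, the map $(\boldsymbol{\theta},\boldsymbol{\lambda},\boldsymbol{\nu})\mapsto\boldsymbol{F}_{\boldsymbol{\theta}}+\nabla\boldsymbol{h}\,\boldsymbol{\lambda}+\nabla\boldsymbol{g}\,\boldsymbol{\nu}$ is affine as well, and $\boldsymbol{\lambda}\mapsto\boldsymbol{H}\boldsymbol{\lambda}$ is linear. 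Hence the first two terms of $\Phi$ are compositions of the convex function $\|\cdot\|_2^2$ with affine maps, so each is jointly convex in $(\boldsymbol{\theta},\boldsymbol{\lambda},\boldsymbol{\nu})$, while the third term is constant; since a sum of convex functions is convex, $\Phi$ is jointly convex on $\mathcal{R}^{n'}\times\mathcal{R}^{m}\times\mathcal{R}^{p}$ (composition with affine maps and summation both preserve convexity, cf.\ \cite{calafiore2014optimization}).

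Next I would invoke partial minimization. The set $\{(\boldsymbol{\lambda},\boldsymbol{\nu}):\boldsymbol{\lambda}\geq 0\}$ is closed and convex, and $\Phi\geq 0$ everywhere, so the partial infimum is finite for every $\boldsymbol{\theta}$; moreover, by Assumption \ref{assu-4} the quadratic form $(\boldsymbol{\lambda},\boldsymbol{\nu})\mapsto\|\nabla\boldsymbol{h}\,\boldsymbol{\lambda}+\nabla\boldsymbol{g}\,\boldsymbol{\nu}\|_2^2$ is positive definite, so $(\boldsymbol{\lambda},\boldsymbol{\nu})\mapsto\Phi(\boldsymbol{\theta},\boldsymbol{\lambda},\boldsymbol{\nu})$ is coercive and the minimum is attained. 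The standard fact that minimizing a jointly convex function over one block of variables ranging in a convex set yields a convex function of the remaining block then shows that $\boldsymbol{\theta}\mapsto l(\boldsymbol{\theta};\boldsymbol{y},\boldsymbol{u})$ is convex on $\mathcal{R}^{n'}$, and in particular on the convex set $\boldsymbol{\Theta}$, which is the assertion.

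The only point requiring care is that $\Phi$ must be shown jointly convex, not merely convex separately in $\boldsymbol{\theta}$ and in $(\boldsymbol{\lambda},\boldsymbol{\nu})$: the cross term between $\boldsymbol{\theta}$ and $(\boldsymbol{\lambda},\boldsymbol{\nu})$ inside $\|\boldsymbol{F}_{\boldsymbol{\theta}}+\nabla\boldsymbol{h}\,\boldsymbol{\lambda}+\nabla\boldsymbol{g}\,\boldsymbol{\nu}\|_2^2$ is absorbed by treating the whole argument as a single affine function of $(\boldsymbol{\theta},\boldsymbol{\lambda},\boldsymbol{\nu})$, and this is exactly where the linearity hypothesis of Assumption \ref{assu-1} (affineness of $\boldsymbol{F}_{\boldsymbol{\theta}}$ in $\boldsymbol{\theta}$, through the derivation leading to \eqref{F_theta}) enters; without it $\boldsymbol{F}_{\boldsymbol{\theta}}$ could be an arbitrary function of $\boldsymbol{\theta}$ and joint convexity would fail. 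All remaining steps are routine applications of convexity-preserving operations, so I do not anticipate further difficulty.
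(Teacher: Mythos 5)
Your proof is correct, and it takes a genuinely different and substantially shorter route than the paper. You treat $(\boldsymbol{\theta},\boldsymbol{\lambda},\boldsymbol{\nu})\mapsto L(\boldsymbol{\theta},\boldsymbol{\lambda},\boldsymbol{\nu};\boldsymbol{y},\boldsymbol{u})$ as a single jointly convex function (affine precomposition of $\|\cdot\|_2^2$, using the affinity of $\boldsymbol{\theta}\mapsto\boldsymbol{F}_{\boldsymbol{\theta}}$ from \eqref{F_theta}) and invoke the standard fact that partial minimization over the convex set $\{\boldsymbol{\lambda}\geq 0\}$ preserves convexity. The paper instead performs explicit block elimination: it completes the square in $\boldsymbol{\nu}$ using $(\nabla\boldsymbol{g}^T\nabla\boldsymbol{g})^{-1}$, then in $\boldsymbol{\lambda}$ using $\boldsymbol{P}=\nabla\boldsymbol{h}^T\boldsymbol{R}\nabla\boldsymbol{h}+\boldsymbol{H}^2$, reduces $l$ to a quadratic in $\boldsymbol{\lambda}$ plus the residual term $\boldsymbol{F}_{\boldsymbol{\theta}}^T\boldsymbol{Q}\boldsymbol{F}_{\boldsymbol{\theta}}$, proves $\boldsymbol{Q}\succeq 0$ via an SVD argument and the Woodbury identity (Lemma \ref{lem-3}), and only then runs what is essentially your partial-minimization step on the reduced form. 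What the paper's heavier machinery buys is an explicit closed-form representation of $l$ after eliminating $\boldsymbol{\nu}$, which is structural information reused nowhere else in the convexity argument; what your argument buys is brevity and weaker hypotheses — you use Assumption \ref{assu-4} only to guarantee coercivity so that the minimum in \eqref{newll} is attained, whereas convexity of the partial infimum itself would hold without it (the objective is bounded below by zero). Your one point of care — that joint convexity, not separate convexity, is what is needed, and that it hinges on the affinity of $\boldsymbol{F}_{\boldsymbol{\theta}}$ in $\boldsymbol{\theta}$ — is exactly the right one.
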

\begin{proof}
    See Appendix B.
\end{proof}

\subsection{Regret Bound}
Under Assumption \ref{penalty}, the regularized loss function (\ref{update1}) can be rewritten as follows.
\begin{align} \label{gg}
	G_k(\boldsymbol{\theta}) = \tfrac{1}{2} \left\| \boldsymbol{\theta} - \boldsymbol{\theta}^{k} \right\|_2^2 + \mu _kl(\boldsymbol{\theta};\boldsymbol{y}^k,\boldsymbol{u}^k),
\end{align}
\noindent where $l(\boldsymbol{\theta};\boldsymbol{y}^k,\boldsymbol{u}^k)$ is defined by (\ref{newll}). The following theorem reveals that the regret bound of the online parameter identification algorithm is $O(\sqrt{K})$.

\begin{theorem}\label{thm-1}
Let Assumptions \ref{convex}-\ref{assu-4} hold. Consider the online parameter identification algorithm (Algorithm 1), where   $\mu_k = \mu_1/\sqrt{k},~\mu_1>0$.
 Then the regret bound defined in \eqref{regret} is $O(\sqrt{K})$.
\end{theorem}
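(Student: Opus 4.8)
The plan is to recognize Algorithm \ref{alg1} as a projected \emph{implicit} (proximal) online subgradient method and to run the classical potential‑function regret analysis with comparator $\boldsymbol{\theta}_* := \boldsymbol{\theta}_*^K \in \boldsymbol{\Theta}$. Three ingredients supply what is needed. By Proposition \ref{conv} the loss $l$ is convex, so the regularized objective $G_k$ in \eqref{gg} is $1$‑strongly convex (the modulus comes from the quadratic conservativeness term $\tfrac12\|\boldsymbol{\theta}-\boldsymbol{\theta}^k\|_2^2$), hence $\tilde{\boldsymbol{\theta}}^{k+1}$ of \eqref{update_phi} is its unique minimizer. By Proposition \ref{lema2}, $l(\cdot;\boldsymbol{y},\boldsymbol{u})$ is $C$‑Lipschitz. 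Finally, Assumption \ref{assu-2} bounds $\boldsymbol{\Theta}$, so $\mathrm{diam}(\boldsymbol{\Theta})\le 2B_1$, and every iterate satisfies $\boldsymbol{\theta}^k\in\boldsymbol{\Theta}$ since step 5 is the Euclidean projection onto the convex set $\boldsymbol{\Theta}$ (Assumption \ref{penalty}) and $\boldsymbol{\theta}^1\in\boldsymbol{\Theta}$.

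Next I would establish a one‑round inequality. Since $G_k$ is $1$‑strongly convex with minimizer $\tilde{\boldsymbol{\theta}}^{k+1}$, $G_k(\boldsymbol{\theta}_*) - G_k(\tilde{\boldsymbol{\theta}}^{k+1}) \ge \tfrac12\|\boldsymbol{\theta}_* - \tilde{\boldsymbol{\theta}}^{k+1}\|_2^2$; expanding via \eqref{gg} and using that $\boldsymbol{\theta}^{k+1}$ is the Euclidean projection of $\tilde{\boldsymbol{\theta}}^{k+1}$ onto the convex $\boldsymbol{\Theta}$ (hence $\|\boldsymbol{\theta}_* - \boldsymbol{\theta}^{k+1}\|_2 \le \|\boldsymbol{\theta}_* - \tilde{\boldsymbol{\theta}}^{k+1}\|_2$) bounds $\mu_k\bigl(l(\tilde{\boldsymbol{\theta}}^{k+1};\boldsymbol{y}^k,\boldsymbol{u}^k)-l(\boldsymbol{\theta}_*;\boldsymbol{y}^k,\boldsymbol{u}^k)\bigr)$ above by $\tfrac12\|\boldsymbol{\theta}_*-\boldsymbol{\theta}^k\|_2^2 - \tfrac12\|\boldsymbol{\theta}_*-\boldsymbol{\theta}^{k+1}\|_2^2 - \tfrac12\|\tilde{\boldsymbol{\theta}}^{k+1}-\boldsymbol{\theta}^k\|_2^2$. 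To recover the quantity $l(\boldsymbol{\theta}^k;\cdot)$ that appears in the regret I add $\mu_k\bigl(l(\boldsymbol{\theta}^k;\cdot)-l(\tilde{\boldsymbol{\theta}}^{k+1};\cdot)\bigr)\le\mu_kC\|\boldsymbol{\theta}^k-\tilde{\boldsymbol{\theta}}^{k+1}\|_2$ (Proposition \ref{lema2}), absorb it through Young's inequality $\mu_kC\|\boldsymbol{\theta}^k-\tilde{\boldsymbol{\theta}}^{k+1}\|_2 \le \tfrac12\mu_k^2C^2 + \tfrac12\|\boldsymbol{\theta}^k-\tilde{\boldsymbol{\theta}}^{k+1}\|_2^2$ so that the stray quadratic term cancels, and divide by $\mu_k$, arriving at
\[ l(\boldsymbol{\theta}^k;\boldsymbol{y}^k,\boldsymbol{u}^k)-l(\boldsymbol{\theta}_*;\boldsymbol{y}^k,\boldsymbol{u}^k) \le \tfrac{C^2}{2}\mu_k + \tfrac{1}{2\mu_k}\bigl(\|\boldsymbol{\theta}^k-\boldsymbol{\theta}_*\|_2^2-\|\boldsymbol{\theta}^{k+1}-\boldsymbol{\theta}_*\|_2^2\bigr). \]

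Summing over $k=1,\dots,K$ then finishes it. With $\mu_k=\mu_1/\sqrt{k}$ the first term contributes $\tfrac{C^2\mu_1}{2}\sum_{k=1}^K k^{-1/2}\le C^2\mu_1\sqrt{K}$. For the telescoping‑type second term I use Abel summation: since $k\mapsto 1/\mu_k$ is nondecreasing and $\|\boldsymbol{\theta}^k-\boldsymbol{\theta}_*\|_2^2 \le \mathrm{diam}(\boldsymbol{\Theta})^2 \le 4B_1^2$ (all iterates and the comparator lie in $\boldsymbol{\Theta}$), it is bounded by $\tfrac{\mathrm{diam}(\boldsymbol{\Theta})^2}{2\mu_K} = \tfrac{\mathrm{diam}(\boldsymbol{\Theta})^2}{2\mu_1}\sqrt{K}$. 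Hence $R_K \le \bigl(C^2\mu_1 + \tfrac{\mathrm{diam}(\boldsymbol{\Theta})^2}{2\mu_1}\bigr)\sqrt{K} = O(\sqrt{K})$, which is the claim.

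The step I expect to be the main obstacle is that the inner minimization \eqref{update_phi} is \emph{unconstrained}, so $\tilde{\boldsymbol{\theta}}^{k+1}$ need not lie in $\boldsymbol{\Theta}$, whereas Propositions \ref{lema2} and \ref{conv} — the Lipschitz and convexity facts used above, which involve $\tilde{\boldsymbol{\theta}}^{k+1}$ — are stated only on $\boldsymbol{\Theta}$. I would dispatch this as follows. First, $l(\cdot;\boldsymbol{y},\boldsymbol{u})$ is in fact finite and convex on all of $\mathcal{R}^{n'}$: in \eqref{newll} the bracketed quadratic is jointly convex in $(\boldsymbol{\theta},\boldsymbol{\lambda},\boldsymbol{\nu})$ (Assumption \ref{assu-1} makes $\boldsymbol{F}_{\boldsymbol{\theta}}$ affine in $\boldsymbol{\theta}$) and coercive in $(\boldsymbol{\lambda},\boldsymbol{\nu})$ by the full‑column‑rank Assumption \ref{assu-4}, so the partial minimum over the convex set $\{\boldsymbol{\lambda}\ge\boldsymbol{0}\}\times\mathcal{R}^p$ is attained and convex in $\boldsymbol{\theta}$ everywhere, hence locally Lipschitz. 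Second, $G_k(\tilde{\boldsymbol{\theta}}^{k+1})\le G_k(\boldsymbol{\theta}^k)$ together with $l\ge 0$ gives $\|\tilde{\boldsymbol{\theta}}^{k+1}-\boldsymbol{\theta}^k\|_2^2 \le 2\mu_k\,l(\boldsymbol{\theta}^k;\boldsymbol{y}^k,\boldsymbol{u}^k) \le 2\mu_1\bar{L}$, where $\bar{L}<\infty$ is a uniform bound on $l$ (take $\boldsymbol{\lambda}=\boldsymbol{0},\boldsymbol{\nu}=\boldsymbol{0}$ in \eqref{newll} and use Proposition \ref{coro-1} and Assumption \ref{assu-2}); thus every $\tilde{\boldsymbol{\theta}}^{k+1}$ stays in the fixed bounded convex set $\{\boldsymbol{\theta}:\mathrm{dist}(\boldsymbol{\theta},\boldsymbol{\Theta})\le\sqrt{2\mu_1\bar{L}}\}$, on which the finite convex function $l$ is convex and Lipschitz with a uniform constant; taking $C$ to be that constant makes the two arguments above valid as written. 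Everything else is routine online‑learning bookkeeping.
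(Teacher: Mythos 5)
Your proposal is correct and follows essentially the same route as the paper: the one-round inequality you derive from the $1$-strong convexity of $G_k$ is exactly what the paper obtains by writing out the subgradient optimality condition $\tilde{\boldsymbol{\theta}}^{k+1}-\boldsymbol{\theta}^k+\mu_k\boldsymbol{s}=\boldsymbol{0}$ and the three-point identity, the Young/AM--GM absorption of the Lipschitz term, the projection (generalized Pythagorean) step, and the Abel summation with $\mu_k=\mu_1/\sqrt{k}$ all appear verbatim, and you even land on the same constant $\bigl(C^2\mu_1+\tfrac{2B_1^2}{\mu_1}\bigr)\sqrt{K}$. Your closing observation that $\tilde{\boldsymbol{\theta}}^{k+1}$ from the unconstrained step \eqref{update_phi} need not lie in $\boldsymbol{\Theta}$, while Propositions \ref{lema2} and \ref{conv} are stated only on $\boldsymbol{\Theta}$, identifies a point the paper's proof silently glosses over, and your patch (convexity and local Lipschitzness of $l$ on all of $\mathcal{R}^{n'}$ plus the a priori bound $\|\tilde{\boldsymbol{\theta}}^{k+1}-\boldsymbol{\theta}^k\|_2^2\le 2\mu_k l(\boldsymbol{\theta}^k;\boldsymbol{y}^k,\boldsymbol{u}^k)$ keeping the iterates in a fixed bounded neighborhood of $\boldsymbol{\Theta}$) is a sound way to close it.
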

\begin{proof}
Due to  \eqref{update_phi} and the convexity of the loss function \eqref{newll} from Proposition \ref{conv}, we can conclude that $\boldsymbol{0}\in \partial G_k(\tilde{\boldsymbol{\theta}}^{k+1})$, where $\partial G_k(\tilde{\boldsymbol{\theta}}^{k+1})$ is the subgradient set of $G_k(\cdot)$ at $\tilde{\boldsymbol{\theta}}^{k+1}$. Therefore, by recalling that  $D(\boldsymbol{x},\boldsymbol{y})=\frac{1}{2}\|\boldsymbol{x}-\boldsymbol{y}\|_2^2$ from Assumption \ref{penalty}, there exists a subgradient $\boldsymbol{s} \in \partial l(\tilde{\boldsymbol{\theta}}^{k+1};\boldsymbol{y}^k,\boldsymbol{u}^k)$, such that
	\begin{equation}
		\begin{aligned} \label{subgradient}
			\tilde{\boldsymbol{\theta}}^{k+1} - \boldsymbol{\theta}^k + \mu_k\boldsymbol{s} =\boldsymbol{0}.
		\end{aligned}
	\end{equation}
\noindent From the convexity of the loss function, we also have
\begin{equation} \label{conloss}
    \begin{aligned}
        l(\boldsymbol{\theta}_*^K;\boldsymbol{y}^k,\boldsymbol{u}^k) \geq l(\tilde{\boldsymbol{\theta}}^{k+1};\boldsymbol{y}^k,\boldsymbol{u}^k) + \boldsymbol{s}^T(\boldsymbol{\theta}_*^K-\tilde{\boldsymbol{\theta}}^{k+1}).
    \end{aligned}
\end{equation}
By using (\ref{subgradient}), we get
\begin{equation} \label{qw14}
	\begin{aligned}	
    &-\mu_k\boldsymbol{s}^T\left(\boldsymbol{\theta}_*^K-\tilde{\boldsymbol{\theta}}^{k+1}\right)
		= \left(\tilde{\boldsymbol{\theta}}^{k+1} - \boldsymbol{\theta}^k\right)^T\left(\boldsymbol{\theta}_*^K-\tilde{\boldsymbol{\theta}}^{k+1}\right) \\
		=& \frac{1}{2} \left\| \boldsymbol{\theta}^{k} - \boldsymbol{\theta}_*^K \right\|_2^2  - \frac{1}{2} \left\| \tilde{\boldsymbol{\theta}}^{k+1} - \boldsymbol{\theta}_*^K \right\|_2^2 - \frac{1}{2} \left\| \tilde{\boldsymbol{\theta}}^{k+1} - \boldsymbol{\theta}^{k} \right\|_2^2.
	\end{aligned}
\end{equation}

According to the Lipschitz continuity of the loss function from Proposition \ref{lema2}, we have
\begin{equation} \label{418}
	\begin{aligned}
		& \mu_k\left(l\left(\boldsymbol{\theta}^k;\boldsymbol{y}^k, \boldsymbol{u}^k\right) - l\left(\tilde{\boldsymbol{\theta}}^{k+1};\boldsymbol{y}^k, \boldsymbol{u}^k\right)\right) - \tfrac{1}{2} \left\| \tilde{\boldsymbol{\theta}}^{k+1} - \boldsymbol{\theta}^{k} \right\|_2^2 \\
		\leq & C \mu_k \left\| \boldsymbol{\theta}^k - \tilde{\boldsymbol{\theta}}^{t+1} \right\|_2 - \tfrac{1}{2} \left\| \tilde{\boldsymbol{\theta}}^{k+1} - \boldsymbol{\theta}^{k} \right\|_2^2 \\
		= & 2\left[\tfrac{1}{2}\left\| \boldsymbol{\theta}^k - \tilde{\boldsymbol{\theta}}^{k+1} \right\|_2 (C \mu_k - \tfrac{1}{2}\left\| \boldsymbol{\theta}^k - \tilde{\boldsymbol{\theta}}^{k+1} \right\|_2)\right] \\
		\leq & 2\left[\frac{\tfrac{1}{2}\left\| \boldsymbol{\theta}^k - \tilde{\boldsymbol{\theta}}^{k+1} \right\|_2 + C \mu_k - \frac{1}{2}\left\| \boldsymbol{\theta}^k - \tilde{\boldsymbol{\theta}}^{k+1} \right\|_2}{2}\right]^2	=   \frac{C^2\mu _k^2}{2},
	\end{aligned}
\end{equation}
where the last inequality holds from $ab \leq (\frac{a+b}{2})^2$ for every real numbers $a,b$.

 Because $\boldsymbol{\theta}^{k+1}$ is the projection of $\tilde{\boldsymbol{\theta}}^{k+1}$ onto the closed and convex set $\boldsymbol{\Theta}$, and $\boldsymbol{\theta}_*^K \in \boldsymbol{\Theta}$, according to the generalized Pythagorous inequality (see \cite[Theorem 2.5.1]{censor}), we have $\frac{1}{2} \left\| \tilde{\boldsymbol{\theta}}^{k+1} - \boldsymbol{\theta}^{k+1} \right\|_2^2 \leq \frac{1}{2} \left\| \tilde{\boldsymbol{\theta}}^{k+1} - \boldsymbol{\theta}_*^K \right\|_2^2 - \frac{1}{2} \left\| \boldsymbol{\theta}^{k+1} - \boldsymbol{\theta}_*^K \right\|_2^2$.
Thus,
\begin{equation} \label{pyi}
	\begin{aligned}
	&\frac{1}{2} \left\| \boldsymbol{\theta}^{k} - \boldsymbol{\theta}_*^K \right\|_2^2 - \frac{1}{2} \left\| \tilde{\boldsymbol{\theta}}^{k+1} - \boldsymbol{\theta}_*^K \right\|_2^2 \\
	\leq& \frac{1}{2} \left\| \boldsymbol{\theta}^{k} - \boldsymbol{\theta}_*^K \right\|_2^2 - \frac{1}{2} \left\| \tilde{\boldsymbol{\theta}}^{k+1} - \boldsymbol{\theta}_*^K \right\|_2^2 + \frac{1}{2} \left\| \tilde{\boldsymbol{\theta}}^{k+1} - \boldsymbol{\theta}^{k+1} \right\|_2^2 \\
	\leq& \frac{1}{2} \left\| \boldsymbol{\theta}^{k} - \boldsymbol{\theta}_*^K \right\|_2^2 - \frac{1}{2} \left\| \boldsymbol{\theta}^{k+1} - \boldsymbol{\theta}_*^K \right\|_2^2.
	\end{aligned}
\end{equation}
Therefore,
\begin{equation} \label{17qqq}
	\begin{aligned}
		&\mu_kl(\boldsymbol{\theta}^k;\boldsymbol{y}^k,\boldsymbol{u}^k) - \mu_kl(\boldsymbol{\theta}_*^K;\boldsymbol{y}^k, \boldsymbol{u}^k) \\
		\overset{(\ref{conloss})}{\leq}& \mu_kl(\boldsymbol{\theta}^k;\boldsymbol{y}^k,\boldsymbol{u}^k) - \mu_kl(\tilde{\boldsymbol{\theta}}^{k+1};\boldsymbol{y}^k,\boldsymbol{u}^k) - \mu_k\boldsymbol{s}^T(\boldsymbol{\theta}_*^K-\tilde{\boldsymbol{\theta}}^{k+1}) \\
		\overset{(\ref{qw14})}{=}& \mu_kl(\boldsymbol{\theta}^k;\boldsymbol{y}^k,\boldsymbol{u}^k) - \mu_kl(\tilde{\boldsymbol{\theta}}^{k+1};\boldsymbol{y}^k,\boldsymbol{u}^k) + \tfrac{1}{2} \left\| \boldsymbol{\theta}^{k} - \boldsymbol{\theta}_*^K \right\|_2^2
\\ 	&- \tfrac{1}{2} \left\| \tilde{\boldsymbol{\theta}}^{k+1} - \boldsymbol{\theta}_*^K \right\|_2^2 - \tfrac{1}{2} \left\| \tilde{\boldsymbol{\theta}}^{k+1} - \boldsymbol{\theta}^{k} \right\|_2^2 \\
		\overset{(\ref{418})}{\leq}& \tfrac{C^2\mu _k^2}{2} + \tfrac{1}{2} \left\| \boldsymbol{\theta}^{k} - \boldsymbol{\theta}_*^K \right\|_2^2 - \tfrac{1}{2} \left\| \tilde{\boldsymbol{\theta}}^{k+1} - \boldsymbol{\theta}_*^K \right\|_2^2  \\
		\overset{(\ref{pyi})}{\leq}& \tfrac{C^2\mu _k^2}{2} + \tfrac{1}{2} \left\| \boldsymbol{\theta}^{k} - \boldsymbol{\theta}_*^K \right\|_2^2 - \tfrac{1}{2} \left\| \boldsymbol{\theta}^{k+1} - \boldsymbol{\theta}_*^K \right\|_2^2.
	\end{aligned}
\end{equation}
After dividing both sides of inequality (19) by $\mu_k$ and summing over $k$, we obtain the following result:
\begin{equation}
	\begin{aligned} \label{123}
		R_K =&  \sum_{k=1}^{K} l(\boldsymbol{\theta}^k;\boldsymbol{y}^k, \boldsymbol{u}^k) - \sum_{k=1}^{K} l(\boldsymbol{\theta}_*^K;\boldsymbol{y}^k, \boldsymbol{u}^k) \\
		= & \underbrace{\sum_{k=1}^K \tfrac{C^2}{2} \mu_k}_{\mathrm{\uppercase\expandafter{\romannumeral1}}} + \underbrace{\sum_{k=1}^K \tfrac{1}{2\mu_k}\left(\left\|\boldsymbol{\theta}^{k}-\boldsymbol{\theta}_*^K\right\|_2^2 - \left\|\boldsymbol{\theta}^{k+1}-\boldsymbol{\theta}_*^K\right\|_2^2\right)}_{\mathrm{\uppercase\expandafter{\romannumeral2}}}.
	\end{aligned}
\end{equation}
Recall $\mu_k = \mu_1/\sqrt{k}$. For the part $\mathrm{\uppercase\expandafter{\romannumeral1}}$, using $\sum_{k=1}^{K} \frac{1}{\sqrt{k}} \leq 1+ \int_{1}^{K}\frac{1}{\sqrt{k}}dt \leq 2 \sqrt{K}$, we have
$\sum_{k=1}^K \frac{C^2}{2} \mu_k  \leq \mu_1C^2 \sqrt{K}.$
For the part $\mathrm{\uppercase\expandafter{\romannumeral2}}$, since $\|\theta\|_2 <B_1$ from Assumption \ref{assu-2}, we have
\begin{align*}
	& \sum_{k=1}^K \frac{1}{2\mu_k}\left(\left\|\boldsymbol{\theta}^{k}-\boldsymbol{\theta}_*^K\right\|_2^2 - \left\|\boldsymbol{\theta}^{k+1}-\boldsymbol{\theta}_*^K\right\|_2^2\right) \\
	=& \frac{1}{2\mu_1} \sum_{k=1}^K \sqrt{k}\left(\left\|\boldsymbol{\theta}^{k}-\boldsymbol{\theta}_*^K\right\|_2^2 - \left\|\boldsymbol{\theta}^{k+1}-\boldsymbol{\theta}_*^K\right\|_2^2\right) \\
	=& \frac{1}{2\mu_1}  \sum_{k=2}^K \left\|\boldsymbol{\theta}^{k}-\boldsymbol{\theta}_*^K\right\|_2^2 (\sqrt{k}-\sqrt{k-1})  \\
	&+ \frac{1}{2\mu_1} \left[ \left\|\boldsymbol{\theta}^{1}-\boldsymbol{\theta}_*^K\right\|_2^2 - \sqrt{K} \left\|\boldsymbol{\theta}^{K+1}-\boldsymbol{\theta}_*^K\right\|_2^2  \right] \\
	\leq& \frac{1}{\mu_1} \left[2B_1^2(\sqrt{K}-1) + 2B_1^2\right]
	=  \frac{1}{\mu_1} 2B_1^2 \sqrt{K}.
\end{align*}
Therefore, $ R_K \leq \left(\frac{2B_1^2}{\mu_1} + \mu_1C^2 \right) \sqrt{K}$ from \eqref{123}. Thus, the conclusion follows.
\end{proof}

\section{Numerical Simulations} \label{s5}

This section aims to apply the proposed Algorithm  1 to identify the parameters of cost functions in a Nash-Cournot model, which is a widely adopted framework for modeling market competition and can be seen as a GNCG \cite[]{gnep4,gnep7,Salant,Abada}.

\subsection{Simulation Model}
Consider an energy market with $N$ companies engaged in selling natural gas. Each company $v$ competes in the market by determining its output $x_v$. When $x_v\geq0$, company $v$ sells $x_v$ units of natural gas to the market; when $x_v<0$, it buys $|x_v|$ units of natural gas from the market.  Additionally, a minimum gas demand of $q_k>0$ must be met, i.e., $\sum_{v=1}^{N}x_v\geq q_k$. Suppose that
the market price of natural gas is influenced by the total amount of natural gas $\sum_{v=1}^{N}x_v$ in the market, and is set as  $p_k := a_k + b_k \sum_{v=1}^{N}x_v,$ where $a_k>0, b_k<0$ are observable variables affected by the market. Take $a_k+b_kq_k>0$ to make the price meaningful. The profit of company $v$ is given by $f_v\left(x_v, \mathbf{x}_{-v},(a_k,b_k),\theta_v\right) := p_kx_v - \theta_vx_v$, where $\theta_v\geq0$ denotes company $v$'s production cost per unit of natural gas. The goal of every company $v$ is to maximize its profit, namely, it select a strategy $x_v$ to minimize the following optimization problem:
\begin{equation*}
	\begin{aligned}
		&\min_{x_v} - f_v\left(x_v, \mathbf{x}_{-v},(a_k,b_k),\theta_v\right), \quad s.t. \ \sum_{v=1}^{N}x_v \geq q_k.
	\end{aligned}
\end{equation*}
This game problem is a jointly convex GNCG. Let $(a_k, b_k, q_k)$ be an observable signal that adjust with the  changes of the   market, and $\theta_v \in \{\theta_v:\theta_v\geq0\}, v \in \mathcal{N}$ be unknown parameters that we aim to estimate online.

\subsection{Simulation Setting and Results}
Consider a market with three companies, where the unknown parameter vector $\boldsymbol{\theta}=\left[\theta_1,\theta_2,\theta_3\right]^T$ is taken as $\left[10,7.5,6\right]^T$. In each round $k$, we independently sample $\boldsymbol{u}^k=\left(a_k, b_k, q_k\right)$ from uniform distributions with ranges $\left[15,15\times120\right]$, $\left[1,1\times120\right]$, and $\left[5,5\times120\right]$, respectively. The observed equilibrium result  $\boldsymbol{y}^k=\boldsymbol{x}^k+\boldsymbol{\epsilon}^k$  incorporates noise, where $\boldsymbol{x}^k$ represents the Nash-Cournot equilibrium under the signal $\boldsymbol{u}^k$, and $\boldsymbol{\epsilon}^k$ is a random vector drawn from the multivariate standard normal distribution. Subsequently, we set $K=100$ and run the online parameter estimation algorithm and the algorithm in the whole batch setting. The effectiveness of this online algorithm is evaluated using the following three performance metrics.
\begin{enumerate}
	\item Time(s):  the time required for the algorithm per round.
	\item $R_k/k$: the average regret of the algorithm. 
	\item $\|\boldsymbol{\theta}^k-\boldsymbol{\theta}_*^k\|_2$:  the deviation between the estimates $\boldsymbol{\theta}^k$ and $\boldsymbol{\theta}_*^k$. Here, $\boldsymbol{\theta}^k$ denotes the estimated unknown parameter in the online parameter algorithm at the $k$-th round, while $\boldsymbol{\theta}_*^k$ represents the corresponding estimate by using data from the previous $k$ rounds in the whole batch setting. 
\end{enumerate}

By setting $\mu_k = \mu_1/\sqrt{k}$ with $\mu_1=0.1$ in Algorithm \ref{alg1}. A comparative analysis of the results obtained with the whole batch setting is illustrated in Figure \ref{main_plot}. Notably, the online algorithm demonstrates a significant advantage in terms of execution time, which remains short and efficient. Conversely, under the whole batch setting, the execution time grows almost linearly with the volume of data. Moreover, both the average regret and $\|\boldsymbol{\theta}^k-\boldsymbol{\theta}_*^k\|_2$ exhibit almost the same  convergence speed towards 0, indicating that the performance  of  Algorithm 1 closely approximates that of the whole batch setting. These results highlight the effectiveness and efficiency of the online algorithm.

\vskip -4mm
\begin{figure}[h]
	\centering 
	\includegraphics[height=9cm,width=6cm]{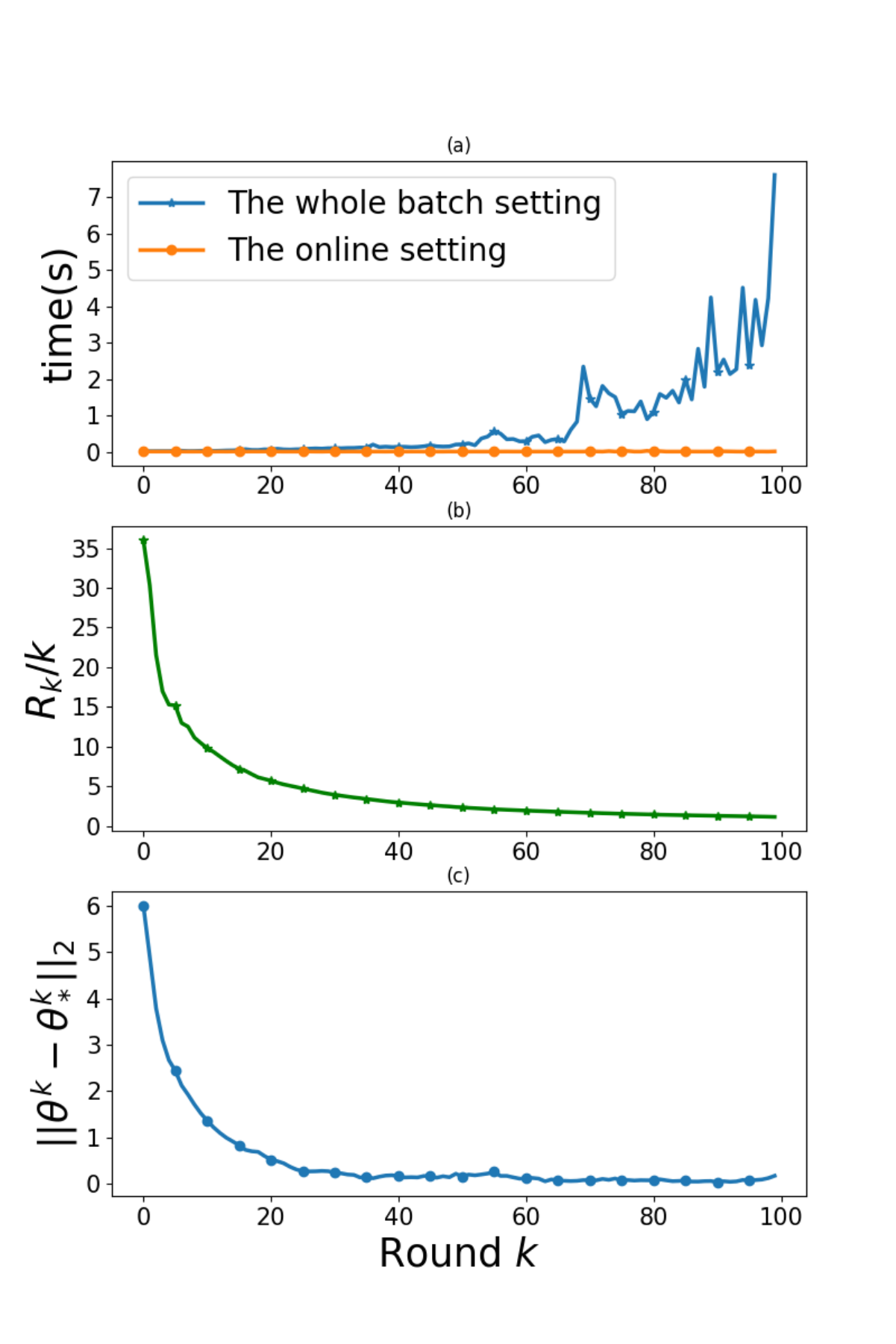}
	\caption{Online parameter identification results in a natural gas market. (a) The vertical coordinate is the time of algorithm execution. (b) The vertical coordinate indicates the average regret per round. (c) The vertical coordinate indicates the deviation between $\boldsymbol{\theta}^k$ and $\boldsymbol{\theta}_*^k$ for each round.}
	\label{main_plot}
\end{figure}

The learning rate $\mu_k$ plays an important role in Algorithm 1 by balancing the existing information with  new observations. We set different learning rates  $\mu_k = \mu_1/\sqrt{k}$ with $\mu_1=0.1, 0.3$ and $0.5$ to explore their impact on the algorithm. The empirical results are presented in Figure \ref{regret_plot} and Figure \ref{est_plot}.  Figure \ref{regret_plot} illustrates that a higher learning rate leads to a lower regret bound. This is might because a larger learning rate indicates a greater emphasis on new observations. Besides,  Figure \ref{est_plot} demonstrates that a larger learning rate facilitates faster convergence of $\boldsymbol{\theta}^k$ to $\boldsymbol{\theta}_*^k$, albeit with greater fluctuations.

\vspace{-5mm}

\begin{figure}
	\centering
	\includegraphics[height=5cm,width=6.6cm]{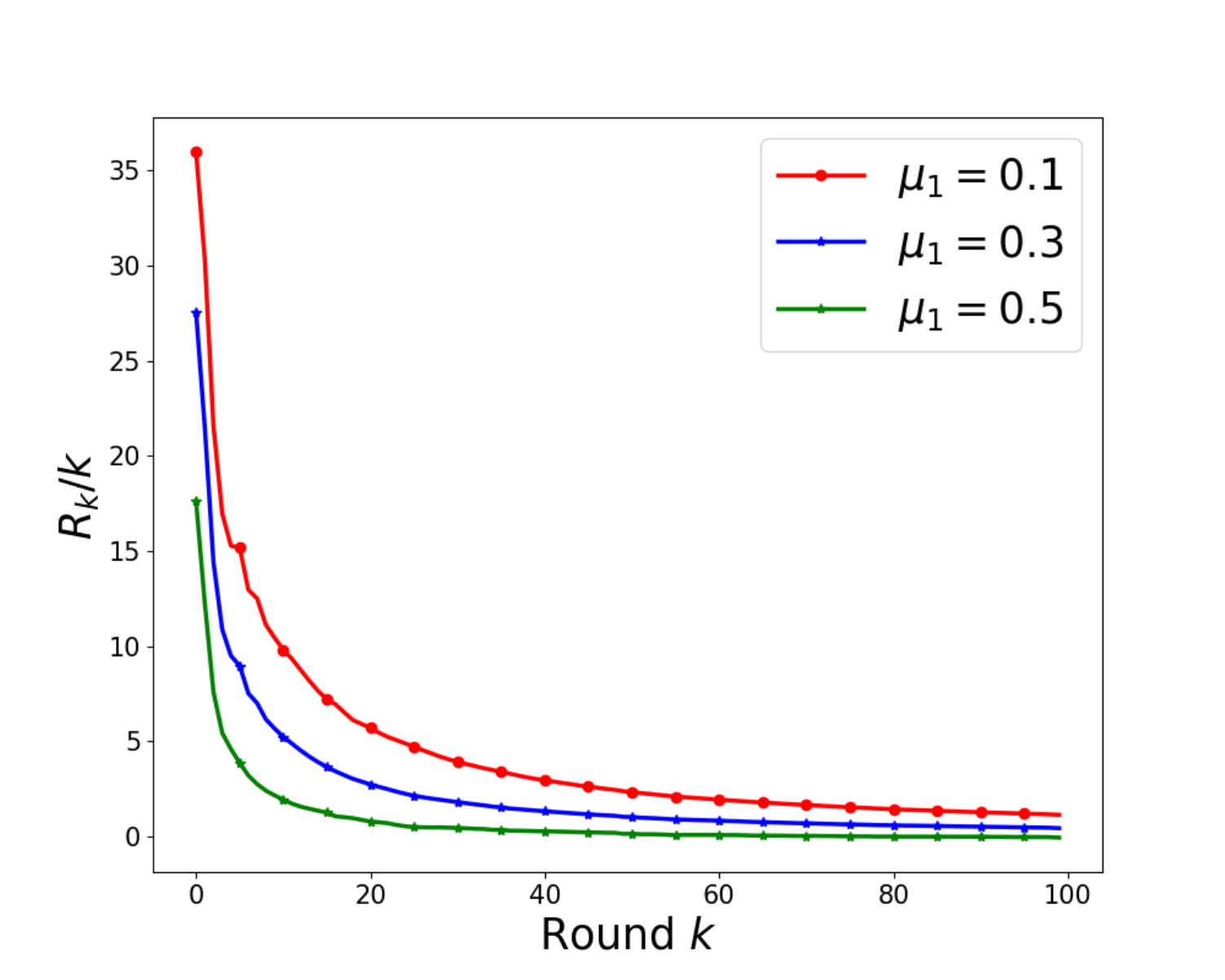}
	\caption{The effect of  learning rates on the average regret.}
	\label{regret_plot}
\end{figure}
\begin{figure}
	\centering
	\includegraphics[height=5cm,width=6.6cm]{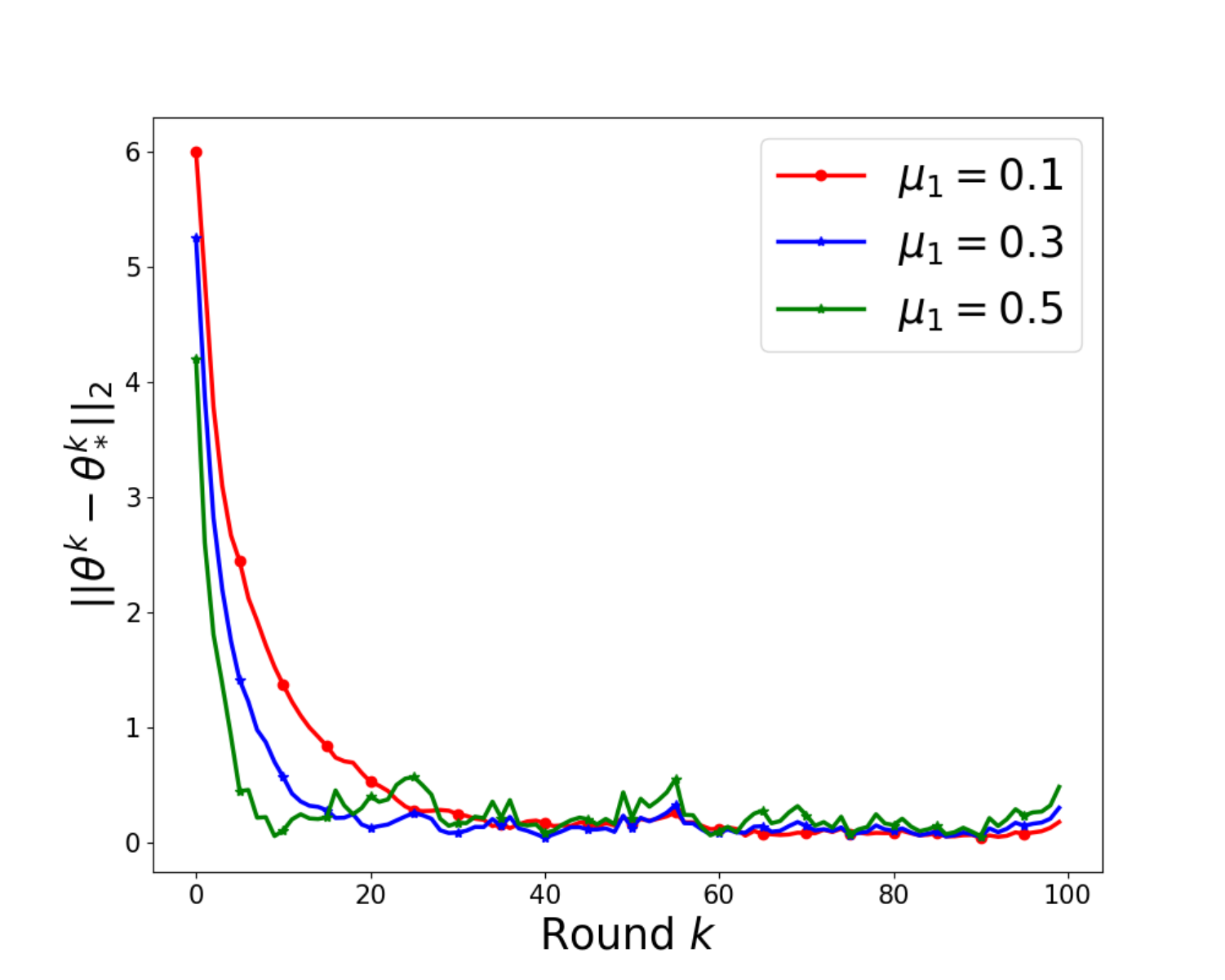}
	\caption{The impact of   learning rates on the estimation deviation  $\| \boldsymbol{\theta}^k-\boldsymbol{\theta}_*^k\|_2$.}
	\label{est_plot}
\end{figure}

\section{Conclusion} \label{s6}
In this work, an online  algorithm was designed  to identify the parameters in the cost functions of GNCG, and it was proven that when the cost function was linear in the unknown parameters, and the learning rate of the algorithm $\mu_k \propto 1/\sqrt{k}$ along with other assumptions were satisfied, the regret bound was $O(\sqrt{K})$. Numerical simulations of a Nash-Cournot problem were also implemented to demonstrate that the performance of the online algorithm was comparable to the whole batch setting after some rounds.



\bibliographystyle{unsrt}
\bibliography{ref}





\appendix

\section{Proof of Proposition \ref{lema2}} 
To prove Proposition \ref{lema2}, we first give the following lemma.
\begin{lemma}
For every $\boldsymbol{\theta}_1 \in \boldsymbol{\Theta}$ and $\boldsymbol{\theta}_2  \in \boldsymbol{\Theta}$, let $\left(\boldsymbol{\lambda}^{\ast },\boldsymbol{\nu}^{\ast}\right)$ and $\left(\boldsymbol{\lambda}^{\# },\boldsymbol{\nu}^{\#}\right)$ be the optimal solutions of the optimization problem
\begin{equation} \label{op_pro}
    \begin{aligned}
        \min_{\boldsymbol{\lambda} \geq  \boldsymbol{0},\boldsymbol{\nu}} L\left(\boldsymbol{\theta}, \boldsymbol{\lambda}, \boldsymbol{\nu};\boldsymbol{y}, \boldsymbol{u}\right),
    \end{aligned}
\end{equation}
corresponding to $\boldsymbol{\theta}_1$ and $\boldsymbol{\theta}_2$, respectively. Under Assumptions \ref{convex} - \ref{assu-4}, for every $\boldsymbol{u} \in \boldsymbol{U}$ and $\boldsymbol{y}$, there exists two real numbers $a, d>0$  such that $\left\| \boldsymbol{\lambda}^{\ast} - \boldsymbol{\lambda}^{\#} \right\|_2 \leq  a \left\|\boldsymbol{\theta}_1 - \boldsymbol{\theta}_2 \right\|_2, \left\| \boldsymbol{\nu}^{\ast} - \boldsymbol{\nu}^{\#} \right\|_2 \leq  a \left\|\boldsymbol{\theta}_1 - \boldsymbol{\theta}_2 \right\|_2$ and $\left\| \boldsymbol{\lambda}^{\ast} \right\| _2,  \left\| \boldsymbol{\lambda}^{\#}  \right\| _2 , \left\| \boldsymbol{\nu}^{\ast} \right\| _2,  \left\| \boldsymbol{\nu}^{\#}  \right\| _2 \leq d$.
\label{lem-1}
\end{lemma}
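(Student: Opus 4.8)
The plan is to recognize the inner problem \eqref{op_pro} as a strictly convex quadratic program in which the parameter $\boldsymbol{\theta}$ enters only through a linear term, and then to exploit the classical Lipschitz sensitivity of such programs. Introduce $\boldsymbol{z}:=\mathrm{col}(\boldsymbol{\lambda},\boldsymbol{\nu})\in\mathcal{R}^{m+p}$, $\boldsymbol{A}:=[\nabla\boldsymbol{h},\nabla\boldsymbol{g}]\in\mathcal{R}^{n\times(m+p)}$, and let $\tilde{\boldsymbol{H}}\in\mathcal{R}^{(m+p)\times(m+p)}$ be the positive semidefinite matrix whose $\boldsymbol{\lambda}$-block equals $\boldsymbol{H}^T\boldsymbol{H}$ and whose remaining entries vanish. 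Under Assumption~\ref{penalty}, discarding the $\boldsymbol{z}$-independent term $\|\boldsymbol{g}\|_2^2$, problem \eqref{op_pro} is equivalent to $\min_{\boldsymbol{z}\in\mathcal{C}}\phi_{\boldsymbol{\theta}}(\boldsymbol{z})$, where $\phi_{\boldsymbol{\theta}}(\boldsymbol{z}):=\|\boldsymbol{F}_{\boldsymbol{\theta}}+\boldsymbol{A}\boldsymbol{z}\|_2^2+\boldsymbol{z}^T\tilde{\boldsymbol{H}}\boldsymbol{z}$ and $\mathcal{C}:=\{\boldsymbol{z}=\mathrm{col}(\boldsymbol{\lambda},\boldsymbol{\nu}):\boldsymbol{\lambda}\geq\boldsymbol{0}\}$ is closed, convex, and independent of $\boldsymbol{\theta}$. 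Since $\nabla^2\phi_{\boldsymbol{\theta}}=2(\boldsymbol{A}^T\boldsymbol{A}+\tilde{\boldsymbol{H}})$ does not depend on $\boldsymbol{\theta}$ and, by Assumption~\ref{assu-4}, $\boldsymbol{A}^T\boldsymbol{A}+\tilde{\boldsymbol{H}}\succeq\boldsymbol{A}^T\boldsymbol{A}\succeq\rho\,\boldsymbol{I}$ with $\rho:=\sigma_{\min}(\boldsymbol{A})^2>0$, each problem has a unique solution and $\phi_{\boldsymbol{\theta}}$ is strongly convex uniformly in $\boldsymbol{\theta}$.

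Next I would obtain the Lipschitz estimate from the two variational inequalities characterizing the minimizers. Writing $\langle\nabla\phi_{\boldsymbol{\theta}_1}(\boldsymbol{z}^{\ast}),\boldsymbol{z}-\boldsymbol{z}^{\ast}\rangle\geq0$ and $\langle\nabla\phi_{\boldsymbol{\theta}_2}(\boldsymbol{z}^{\#}),\boldsymbol{z}-\boldsymbol{z}^{\#}\rangle\geq0$ for all $\boldsymbol{z}\in\mathcal{C}$ (with $\boldsymbol{z}^{\ast}=\mathrm{col}(\boldsymbol{\lambda}^{\ast},\boldsymbol{\nu}^{\ast})$ and $\boldsymbol{z}^{\#}=\mathrm{col}(\boldsymbol{\lambda}^{\#},\boldsymbol{\nu}^{\#})$), testing the first inequality with $\boldsymbol{z}=\boldsymbol{z}^{\#}$ and the second with $\boldsymbol{z}=\boldsymbol{z}^{\ast}$, adding, and using $\nabla\phi_{\boldsymbol{\theta}}(\boldsymbol{z})=2(\boldsymbol{A}^T\boldsymbol{A}+\tilde{\boldsymbol{H}})\boldsymbol{z}+2\boldsymbol{A}^T\boldsymbol{F}_{\boldsymbol{\theta}}$, one arrives at $\rho\,\|\boldsymbol{z}^{\ast}-\boldsymbol{z}^{\#}\|_2^2\leq\|\boldsymbol{A}^T(\boldsymbol{F}_{\boldsymbol{\theta}_1}-\boldsymbol{F}_{\boldsymbol{\theta}_2})\|_2\,\|\boldsymbol{z}^{\ast}-\boldsymbol{z}^{\#}\|_2$. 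Cancelling one factor and invoking \eqref{lip_F} of Proposition~\ref{coro-1} together with $\|\boldsymbol{A}\|_2\leq\sqrt{2}\,B_2$ (Assumption~\ref{assu-2}) gives $\|\boldsymbol{z}^{\ast}-\boldsymbol{z}^{\#}\|_2\leq(\sqrt{2}\,B_2M_2/\rho)\,\|\boldsymbol{\theta}_1-\boldsymbol{\theta}_2\|_2$; since $\|\boldsymbol{\lambda}^{\ast}-\boldsymbol{\lambda}^{\#}\|_2$ and $\|\boldsymbol{\nu}^{\ast}-\boldsymbol{\nu}^{\#}\|_2$ are each bounded by $\|\boldsymbol{z}^{\ast}-\boldsymbol{z}^{\#}\|_2$, one may take $a:=\sqrt{2}\,B_2M_2/\rho$.

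For the uniform bound I would compare the optimal value with the value at the feasible point $\boldsymbol{z}=\boldsymbol{0}$: by \eqref{bound_F}, $\|\boldsymbol{F}_{\boldsymbol{\theta}}+\boldsymbol{A}\boldsymbol{z}^{\ast}\|_2^2\leq\phi_{\boldsymbol{\theta}}(\boldsymbol{z}^{\ast})\leq\phi_{\boldsymbol{\theta}}(\boldsymbol{0})=\|\boldsymbol{F}_{\boldsymbol{\theta}}\|_2^2\leq M_1^2$, whence $\|\boldsymbol{A}\boldsymbol{z}^{\ast}\|_2\leq\|\boldsymbol{F}_{\boldsymbol{\theta}}+\boldsymbol{A}\boldsymbol{z}^{\ast}\|_2+\|\boldsymbol{F}_{\boldsymbol{\theta}}\|_2\leq 2M_1$, and therefore $\|\boldsymbol{z}^{\ast}\|_2\leq\|\boldsymbol{A}\boldsymbol{z}^{\ast}\|_2/\sqrt{\rho}\leq 2M_1/\sqrt{\rho}=:d$ by column full rank; the same bound holds for $\boldsymbol{z}^{\#}$, hence for all of $\|\boldsymbol{\lambda}^{\ast}\|_2,\|\boldsymbol{\lambda}^{\#}\|_2,\|\boldsymbol{\nu}^{\ast}\|_2,\|\boldsymbol{\nu}^{\#}\|_2$.

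The step I expect to require the most care is the uniformity of the constant $\rho$. Assumption~\ref{assu-4} only asserts that $\boldsymbol{A}=[\nabla\boldsymbol{h},\nabla\boldsymbol{g}]$ has full column rank for each $(\boldsymbol{y},\boldsymbol{u})$, i.e. that $\sigma_{\min}(\boldsymbol{A})>0$ pointwise, whereas the $a$ and $d$ obtained above depend on a lower bound for $\sigma_{\min}(\boldsymbol{A})$ holding simultaneously for all admissible $(\boldsymbol{y},\boldsymbol{u})$. I would close this gap by noting that $\boldsymbol{u}$ ranges over the bounded set $\boldsymbol{U}$ and $\boldsymbol{y}$ over a bounded set (Assumption~\ref{assu-2}), that $(\boldsymbol{y},\boldsymbol{u})\mapsto\sigma_{\min}([\nabla\boldsymbol{h},\nabla\boldsymbol{g}])$ is continuous (Assumption~\ref{jointly}) and strictly positive, and hence, on the closure of the admissible set, is bounded below by some $\rho>0$; using this uniform $\rho$ makes both $a$ and $d$ independent of $(\boldsymbol{y},\boldsymbol{u})$, as the statement requires.
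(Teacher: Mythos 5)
Your proposal is correct and its core is the same as the paper's: both recast \eqref{op_pro} as a parametric quadratic program in $\boldsymbol{z}=\mathrm{col}(\boldsymbol{\lambda},\boldsymbol{\nu})$ whose Hessian $[\nabla\boldsymbol{h},\nabla\boldsymbol{g}]^T[\nabla\boldsymbol{h},\nabla\boldsymbol{g}]+\mathrm{blockdiag}(\boldsymbol{H}^2,\boldsymbol{0})$ is positive definite by Assumption \ref{assu-4} and in which $\boldsymbol{\theta}$ enters only through the linear term $2\boldsymbol{F}_{\boldsymbol{\theta}}^T[\nabla\boldsymbol{h},\nabla\boldsymbol{g}]\boldsymbol{z}$, and both extract the Lipschitz bound from the first-order optimality conditions of the two instances (your variational-inequality/monotonicity form is equivalent to the paper's explicit KKT system with multipliers for $\boldsymbol{\lambda}\geq 0$ followed by complementary slackness). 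Two points differ. First, for the uniform bound $d$ you compare the optimal value with the feasible point $\boldsymbol{z}=\boldsymbol{0}$ and use $\|\boldsymbol{A}\boldsymbol{z}^{\ast}\|_2\geq\sqrt{\rho}\,\|\boldsymbol{z}^{\ast}\|_2$, whereas the paper substitutes the stationarity relation $\boldsymbol{z}^{\ast T}\boldsymbol{A}\boldsymbol{z}^{\ast}=-\boldsymbol{z}^{\ast T}\boldsymbol{b}^T\boldsymbol{F}_{\boldsymbol{\theta}_1}$ and bounds via $\lambda_{\min}(\boldsymbol{A})$; your route is slightly more elementary since it needs only feasibility of $\boldsymbol{0}$ and no complementarity. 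Second, you explicitly flag that the constants depend on a lower bound for $\sigma_{\min}([\nabla\boldsymbol{h},\nabla\boldsymbol{g}])$ that must hold uniformly over admissible $(\boldsymbol{y},\boldsymbol{u})$ if the constant $C$ in Proposition \ref{lema2} (and hence the regret bound) is to be uniform; the paper works pointwise with $\lambda_{\min}(\boldsymbol{A})$ and does not address this, so your compactness-plus-continuity remark is a genuine (if minor) strengthening, though it implicitly requires the admissible set of $(\boldsymbol{y},\boldsymbol{u})$ to be closed, which the stated assumptions do not quite guarantee.
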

\begin{proof}
By recalling the definition of $L\left(\boldsymbol{\theta}, \boldsymbol{\lambda}, \boldsymbol{\nu};\boldsymbol{y}, \boldsymbol{u}\right)$ under \eqref{newll} and the notation of (\ref{notation}), we have
\begin{equation} \label{newL}
    \begin{aligned}
        L\left(\boldsymbol{\theta}, \boldsymbol{\lambda}, \boldsymbol{\nu};\boldsymbol{y}, \boldsymbol{u}\right)
        =& \left\|\boldsymbol{F}_{\boldsymbol{\theta}} + \nabla \boldsymbol{h}\boldsymbol{\lambda} + \nabla \boldsymbol{g} \boldsymbol{\nu}\right\|_2^2 + \left\|\boldsymbol{H}\boldsymbol{\lambda}\right\|_2^2 + \|\boldsymbol{g}\|_2^2\\
        =& \begin{bmatrix}
              \boldsymbol{\lambda}   \\
              \boldsymbol{\nu}  \\
            \end{bmatrix}^T
            \begin{bmatrix}
              \nabla \boldsymbol{h}^T \nabla \boldsymbol{h} + \boldsymbol{H}^2 &  \nabla \boldsymbol{h}^T \nabla \boldsymbol{g} \\
              \nabla \boldsymbol{g}^T \nabla \boldsymbol{h} &  \nabla \boldsymbol{g}^T \nabla \boldsymbol{g} \\
            \end{bmatrix}
            \begin{bmatrix}
              \boldsymbol{\lambda}   \\
              \boldsymbol{\nu}  \\
            \end{bmatrix} \\
        &+ 2 \boldsymbol{F}_{\boldsymbol{\theta}}^T\begin{bmatrix}
              \nabla \boldsymbol{h} & \nabla \boldsymbol{g}
            \end{bmatrix}
            \begin{bmatrix}
              \boldsymbol{\lambda}   \\
              \boldsymbol{\nu}  \\
            \end{bmatrix} + \boldsymbol{F}_{\boldsymbol{\theta}}^T\boldsymbol{F}_{\boldsymbol{\theta}} + \boldsymbol{g}^T\boldsymbol{g},
    \end{aligned}
\end{equation}
 where $\boldsymbol{H}^2 =diag\left(\boldsymbol{h}_1^2(\boldsymbol{y}, \boldsymbol{u}),\cdots,\boldsymbol{h}_m^2(\boldsymbol{y}, \boldsymbol{u})\right)$. Therefore, the optimization problem (\ref{op_pro}) can be abbreviated as
      \begin{equation}\label{zAz}
        \begin{aligned}
          \min_{\boldsymbol{z}}& ~\boldsymbol{z}^T\boldsymbol{Az} + 2\boldsymbol{F}_{\boldsymbol{\theta}}^T\boldsymbol{bz} + \tilde{\boldsymbol{c}} ,\ s.t. \ \boldsymbol{Bz} \geq \boldsymbol{0},
        \end{aligned}
      \end{equation}
where $\boldsymbol{z}=\begin{bmatrix}
            \boldsymbol{\lambda}   \\
            \boldsymbol{\nu}  \\
           \end{bmatrix}$, $\boldsymbol{A}=\begin{bmatrix}
          \nabla \boldsymbol{h}^T \nabla \boldsymbol{h} + \boldsymbol{H}^2 &  \nabla \boldsymbol{h}^T \nabla \boldsymbol{g} \\
          \nabla \boldsymbol{g}^T \nabla \boldsymbol{h} &  \nabla \boldsymbol{g}^T \nabla \boldsymbol{g} \\
           \end{bmatrix}$, $\boldsymbol{b}=\begin{bmatrix}
          \nabla \boldsymbol{h} & \nabla \boldsymbol{g}
          \end{bmatrix}$, $\tilde{\boldsymbol{c}}=\boldsymbol{F}_{\boldsymbol{\theta}}^T\boldsymbol{F}_{\boldsymbol{\theta}} + \boldsymbol{g}^T\boldsymbol{g}$ and $\boldsymbol{B}=\begin{bmatrix}
          \boldsymbol{I}_m & \boldsymbol{O}_p
          \end{bmatrix}$.
We now show that matrix $\boldsymbol{A}$ is a positive definite matrix. Because $\left[\nabla \boldsymbol{h}, \nabla \boldsymbol{g}\right]$ is a column full rank matrix from Assumption \ref{assu-4}, we have that $\nabla \boldsymbol{h w}_1 + \nabla \boldsymbol{g w}_2 \neq 0$ for every non-zero vector $\boldsymbol{w} = \left[\boldsymbol{w}_1^T,\boldsymbol{w}_2^T\right]^T \in \mathcal{R}^{m+p}$. Thus, $\boldsymbol{w}^T\boldsymbol{Aw} = \left\| \nabla \boldsymbol{h w}_1 + \nabla \boldsymbol{g w}_2 \right\|_2^2 + \| \boldsymbol{H w}_1 \|_2^2 > 0$  for every non-zero vector $\boldsymbol{w}$, hence $\boldsymbol{A}$ is  positive definite.

For every $\boldsymbol{\theta}_1 \in \boldsymbol{\Theta}$ and $\boldsymbol{\theta}_2  \in \boldsymbol{\Theta}$, take $\boldsymbol{z}^{\ast} = \mathrm{col}(\boldsymbol{\lambda}^{\ast},\boldsymbol{\nu}^{\ast}) $ and $\boldsymbol{z}^{\#} = \mathrm{col}(\boldsymbol{\lambda}^{\#},\boldsymbol{\nu}^{\#})$ be the optimal solutions of the optimization problem (\ref{zAz}) corresponding to $\boldsymbol{\theta}_1$ and $\boldsymbol{\theta}_2$, respectively. According to the KKT conditions, we have
            \begin{subequations}
              \begin{align} \label{43a}
                &\boldsymbol{Az}^{\ast} + \boldsymbol{b}^T\boldsymbol{F}_{\boldsymbol{\theta}_1}  -\frac{1}{2}\boldsymbol{B}^T\boldsymbol{\beta}_1 = 0 \\ \label{43b}
                &\boldsymbol{0} \leq \boldsymbol{Bz}^{\ast}  \perp \boldsymbol{\beta}_1 \geq \boldsymbol{0} , \boldsymbol{\beta}_1\in \mathcal{R}^{m+p}\\
                \label{43c}
                &\boldsymbol{Az}^{\#} + \boldsymbol{b}^T\boldsymbol{F}_{\boldsymbol{\theta}_2}   -\frac{1}{2}\boldsymbol{B}^T\boldsymbol{\beta}_2 = 0\\ \label{43d}
                &\boldsymbol{0} \leq \boldsymbol{Bz}^{\#}  \perp \boldsymbol{\beta}_2 \geq \boldsymbol{0}, \boldsymbol{\beta}_2\in \mathcal{R}^{m+p}.
              \end{align}
            \end{subequations}
Subtracting (\ref{43c}) from (\ref{43a}), we obtain
                \begin{align}\label{44}
                \boldsymbol{A}(\boldsymbol{z}^{\ast}-\boldsymbol{z}^{\#}) = -\boldsymbol{b}^T(\boldsymbol{F}_{\boldsymbol{\theta}_1}-\boldsymbol{F}_{\boldsymbol{\theta}_2}) + \frac{1}{2}\boldsymbol{B}^T(\boldsymbol{\beta}_1-\boldsymbol{\beta}_2).
                \end{align}
\noindent Multiplying the vector $(\boldsymbol{z}^{\ast}-\boldsymbol{z}^{\#})^T$ on both sides of (\ref{44}), we get
          \begin{equation} \label{45}
            \begin{aligned}
            &(\boldsymbol{z}^{\ast}-\boldsymbol{z}^{\#})^T\boldsymbol{A}(\boldsymbol{z}^{\ast}-\boldsymbol{z}^{\#})  \\
            = & - (\boldsymbol{z}^{\ast}-\boldsymbol{z}^{\#})^T\boldsymbol{b}^T(\boldsymbol{F}_{\boldsymbol{\theta}_1}-\boldsymbol{F}_{\boldsymbol{\theta}_2}) + (\boldsymbol{z}^{\ast}-\boldsymbol{z}^{\#})^T\boldsymbol{B}^T(\boldsymbol{\beta}_1-\boldsymbol{\beta}_2) \\
            =&- (\boldsymbol{z}^{\ast}-\boldsymbol{z}^{\#})^T\boldsymbol{b}^T(\boldsymbol{F}_{\boldsymbol{\theta}_1}-\boldsymbol{F}_{\boldsymbol{\theta}_2}) - \frac{1}{2}(\boldsymbol{Bz}^{\ast})^T\boldsymbol{\beta}_2 - \frac{1}{2}(\boldsymbol{Bz}^{\#})^T\boldsymbol{\beta}_1 \\
            \leq &- (\boldsymbol{z}^{\ast}-\boldsymbol{z}^{\#})^T\boldsymbol{b}^T(\boldsymbol{F}_{\boldsymbol{\theta}_1}-\boldsymbol{F}_{\boldsymbol{\theta}_2}),
            \end{aligned}
          \end{equation}
where the second equality holds because $(\boldsymbol{Bz}^{\ast})^T\boldsymbol{\beta}_1=0$ and $(\boldsymbol{Bz}^{\#})^T\boldsymbol{\beta}_2 = 0$, and the last inequality holds because $(\boldsymbol{Bz}^{\ast})^T\boldsymbol{\beta}_2 \geq 0$ and $(\boldsymbol{Bz}^{\#})^T\boldsymbol{\beta}_1 \geq 0$ from $\boldsymbol{Bz}^*$, $\boldsymbol{\beta}_1$, $\boldsymbol{Bz}^{\#}$ and $\boldsymbol{\beta}_2\geq \boldsymbol{0}$ in \eqref{43b} and \eqref{43d}.

Because $\boldsymbol{A}$ is a positive definite matrix, all its eigenvalues are positive. Then, we denote by the smallest eigenvalue $\lambda_{\min}(\boldsymbol{A}) > 0$. Therefore,
\begin{equation} \label{30q}
    \begin{aligned}
      \| \boldsymbol{z}^{\ast}-\boldsymbol{z}^{\#} \|_2^2 \leq &\frac{1}{\lambda_{\min}(\boldsymbol{A})}(\boldsymbol{z}^{\ast}-\boldsymbol{z}^{\#})^T\boldsymbol{A}(\boldsymbol{z}^{\ast}-\boldsymbol{z}^{\#})\\
    \overset{(\ref{45})}{\leq} & - \frac{1}{\lambda_{\min}(\boldsymbol{A})} (\boldsymbol{z}^{\ast}-\boldsymbol{z}^{\#})^T\boldsymbol{b}^T(\boldsymbol{F}_{\boldsymbol{\theta}_1}-\boldsymbol{F}_{\boldsymbol{\theta}_2}) \\
      \leq & \frac{1}{\lambda_{\min}(\boldsymbol{A})} \left\|\boldsymbol{z}^{\ast}-\boldsymbol{z}^{\#}\right\|_2\left\|\boldsymbol{b}\right\|_2\left\|\boldsymbol{F}_{\boldsymbol{\theta}_1}-\boldsymbol{F}_{\boldsymbol{\theta}_2}\right\|_2,
      \end{aligned}
    \end{equation}
where the last inequality  follows from the Cauchy-Schwartz inequality. Furthermore, we obtain from Proposition \ref{coro-1} that
\begin{equation} \label{31q}
    \begin{aligned}
       & \left\|\boldsymbol{z}^{\ast}-\boldsymbol{z}^{\#}\right\|_2
        \leq \frac{1}{\lambda_{\min}(\boldsymbol{A})} \| \boldsymbol{b} \|_2 \| \boldsymbol{F}_{\boldsymbol{\theta}_1}-\boldsymbol{F}_{\boldsymbol{\theta}_2} \|_2
       \\& \leq\frac{M_2}{\lambda_{\min}(\boldsymbol{A})} \| \boldsymbol{b} \|_2 \|\boldsymbol{\theta}_1 - \boldsymbol{\theta}_2 \|_2<\frac{2M_2B_2}{\lambda_{\min}(\boldsymbol{A})}  \|\boldsymbol{\theta}_1 - \boldsymbol{\theta}_2 \|_2,
      \end{aligned}
    \end{equation}
where the last inequality holds since $\| \boldsymbol{b} \|_2 \leq \|\nabla \boldsymbol{h}\|_2+\| \nabla \boldsymbol{g}\|_2<2B_2 $ from Assumption \ref{assu-2}(2). Thus, $\left\| \boldsymbol{\lambda}^{\ast} - \boldsymbol{\lambda}^{\#} \right\|_2 \leq \left\|\boldsymbol{z}^{\ast}-\boldsymbol{z}^{\#}\right\|_2  \leq  a \left\|\boldsymbol{\theta}_1 - \boldsymbol{\theta}_2 \right\|_2$ and $\left\| \boldsymbol{\nu}^{\ast} - \boldsymbol{\nu}^{\#} \right\|_2  \leq \left\|\boldsymbol{z}^{\ast}-\boldsymbol{z}^{\#}\right\|_2\leq  a \left\|\boldsymbol{\theta}_1 - \boldsymbol{\theta}_2 \right\|_2$ hold with $a\triangleq \frac{2M_2B_2}{\lambda_{\min}(\boldsymbol{A})}  $.

From (\ref{43a}) and (\ref{43b}), we have $\boldsymbol{z}^{\ast T}\boldsymbol{Az}^{\ast}  = - \boldsymbol{z}^{\ast T}\boldsymbol{b}^T\boldsymbol{F}_{\boldsymbol{\theta}_1}  + \frac{1}{2}\boldsymbol{z}^{\ast T}\boldsymbol{B}^T\boldsymbol{\beta}_1 = - \boldsymbol{z}^{\ast T} \boldsymbol{b}^T \boldsymbol{F}_{\boldsymbol{\theta}_1}$.
Similar to the derivation of (\ref{30q}) and (\ref{31q}), we have $\| \boldsymbol{z}^{\ast} \|_2 \leq \frac{1}{\lambda_{\min}(\boldsymbol{A})} \left\| \boldsymbol{b}^T \boldsymbol{F}_{\boldsymbol{\theta}_1} \right\|_2 \leq  \frac{2M_1B_2}{\lambda_{\min}(\boldsymbol{A})} $,
where the last inequality is obtained by  using $\| \boldsymbol{b} \|_2  <2B_2 $ and $\|\boldsymbol{F}_{\boldsymbol{\theta}}\|<M_1$ in Proposition \ref{coro-1}. And because $\| \boldsymbol{\lambda}^{\ast} \| _2\leq \| \boldsymbol{z}^{\ast} \|_2$ and $\| \boldsymbol{\nu}^{\ast}  \| _2\leq \| \boldsymbol{z}^{\ast} \|_2$, we reveal that $\| \boldsymbol{\lambda}^{\ast} \| _2,  \| \boldsymbol{\nu}^{\ast}  \| _2 \leq d$, where $d=\frac{2M_1B_2}{\lambda_{\min}(\boldsymbol{A})}  $.
Then the boundedness of $\boldsymbol{\lambda}^{\#}$ and $\boldsymbol{\nu}^{\#}$ can be obtained in the same way. So, the lemma holds.
\end{proof}

\emph{Proof of Proposition \ref{lema2}}:
For every $ \boldsymbol{u} \in \boldsymbol{U} ,  \boldsymbol{y}$ and every $\boldsymbol{\theta}_1, \boldsymbol{\theta}_2 \in  \boldsymbol{\Theta} $, let $\left(\boldsymbol{\lambda}^{\ast },\boldsymbol{\nu}^{\ast}\right)$ and $\left(\boldsymbol{\lambda}^{\# },\boldsymbol{\nu}^{\#}\right)$ be the optimal solutions of the optimization problem (\ref{op_pro}) corresponding to $\boldsymbol{\theta}_1$, $\boldsymbol{\theta}_2$, respectively. Then we have that  $l(\boldsymbol{\theta}_1;\boldsymbol{y},\boldsymbol{u}) = L(\boldsymbol{\theta}_1,\boldsymbol{\lambda}^*, \boldsymbol{\nu}^*;\boldsymbol{y},\boldsymbol{u})$ and $l(\boldsymbol{\theta}_2;\boldsymbol{y},\boldsymbol{u}) = L(\boldsymbol{\theta}_2,\boldsymbol{\lambda}^{\# }, \boldsymbol{\nu}^{\# };\boldsymbol{y},\boldsymbol{u})$, where $L(\cdot;\boldsymbol{y},\boldsymbol{u})$ is same as (\ref{newL}).

Without loss of generality, suppose $l(\boldsymbol{\theta}_1;\boldsymbol{y},\boldsymbol{u}) \geq l(\boldsymbol{\theta}_2;\boldsymbol{y},\boldsymbol{u})$. We obtain
\begin{equation} \label{l1l2}
	\begin{aligned}
		&|l(\boldsymbol{\theta}_1;\boldsymbol{y},\boldsymbol{u}) - l(\boldsymbol{\theta}_2;\boldsymbol{y},\boldsymbol{u}) | =  l(\boldsymbol{\theta}_1;\boldsymbol{y},\boldsymbol{u}) - l(\boldsymbol{\theta}_2;\boldsymbol{y},\boldsymbol{u}) \\
		= & \left\| \boldsymbol{F}_{\boldsymbol{\theta}_1}+\nabla \boldsymbol{h}\boldsymbol{\lambda}^{\ast} + \nabla \boldsymbol{g \nu}^{\ast} \right\|_2^2 - \| \boldsymbol{F}_{\boldsymbol{\theta}_2}+ \nabla \boldsymbol{h\lambda}^{\#} + \nabla \boldsymbol{g \nu}^{\#} \|_2^2 \\
		&+ \| \boldsymbol{H\lambda}^{\ast} \|_2^2 - \| \boldsymbol{H\lambda}^{\#} \|_2^2 \\
		= & \Bigl\langle \boldsymbol{F}_{\boldsymbol{\theta}_1}-\boldsymbol{F}_{\boldsymbol{\theta}_2} + \nabla \boldsymbol{h}(\boldsymbol{\lambda}^{\ast} - \boldsymbol{\lambda}^{\#}) + \nabla \boldsymbol{g} (\boldsymbol{\nu}^{\ast} - \boldsymbol{\nu}^{\#}), \\
	  & \ \ \boldsymbol{F}_{\boldsymbol{\theta}_1} +\nabla \boldsymbol{h}\boldsymbol{\lambda}^{\ast} + \nabla \boldsymbol{g \nu}^{\ast} + \boldsymbol{F}_{\boldsymbol{\theta}_2}+\nabla \boldsymbol{h\lambda}^{\#} + \nabla \boldsymbol{g \nu}^{\#} \Bigr\rangle \\
		& + \langle  \boldsymbol{H}(\boldsymbol{\lambda}^{\ast}-\boldsymbol{\lambda}^{\#}), \boldsymbol{H}(\boldsymbol{\lambda}^{\ast} + \boldsymbol{\lambda}^{\#}) \rangle \\
		 \leq & \left( \|\boldsymbol{F}_{\boldsymbol{\theta}_1}-\boldsymbol{F}_{\boldsymbol{\theta}_2}\|_2 + \| \nabla \boldsymbol{h}(\boldsymbol{\lambda}^{\ast} - \boldsymbol{\lambda}^{\#}) \|_2 + \| \nabla \boldsymbol{g} (\boldsymbol{\nu}^{\ast} - \boldsymbol{\nu}^{\#}) \|_2 \right)\\
		 &\cdot \left(\|\boldsymbol{F}_{\boldsymbol{\theta}_1} +\nabla \boldsymbol{h}\boldsymbol{\lambda}^{\ast} + \nabla \boldsymbol{g \nu}^{\ast} \|_2 + \|\boldsymbol{F}_{\boldsymbol{\theta}_2}+\nabla \boldsymbol{h\lambda}^{\#} + \nabla \boldsymbol{g \nu}^{\#} \|_2\right) \\
	   &+ \| \boldsymbol{H}(\boldsymbol{\lambda}^{\ast}-\boldsymbol{\lambda}^{\#})\|_2 \| \boldsymbol{H}(\boldsymbol{\lambda}^{\ast} + \boldsymbol{\lambda}^{\#})\|_2,
	  \end{aligned}
\end{equation}
where the last inequality holds by the Cauchy-Schwartz inequality. 
According to the boundedness of related variables from Assumption \ref{assu-2}, Proposition \ref{coro-1} and Lemma \ref{lem-1}, there exists a real number $ E >0$ such that $\| \boldsymbol{H}(\boldsymbol{\lambda}^{\ast} + \boldsymbol{\lambda}^{\#})\|_2, \| \boldsymbol{F}_{\boldsymbol{\theta}_1} +\nabla \boldsymbol{h}\boldsymbol{\lambda}^{\ast} + \nabla \boldsymbol{g \nu}^{\ast} \|_2, \| \boldsymbol{F}_{\boldsymbol{\theta}_2}+\nabla \boldsymbol{h\lambda}^{\#} + \nabla \boldsymbol{g \nu}^{\#} \|_2 \leq E$.
This together with (\ref{l1l2}) implies that
\begin{align*}
&|l(\boldsymbol{\theta}_1;\boldsymbol{y},\boldsymbol{u}) - l(\boldsymbol{\theta}_2;\boldsymbol{y},\boldsymbol{u}) |
\leq E \| \boldsymbol{H}(\boldsymbol{\lambda}^{\ast}-\boldsymbol{\lambda}^{\#})\|_2+\\& 2E\big(\|\boldsymbol{F}_{\boldsymbol{\theta}_1}-\boldsymbol{F}_{\boldsymbol{\theta}_2}\|_2 + \| \nabla \boldsymbol{h}(\boldsymbol{\lambda}^{\ast} - \boldsymbol{\lambda}^{\#}) \|_2 + \| \nabla \boldsymbol{g} (\boldsymbol{\nu}^{\ast} - \boldsymbol{\nu}^{\#}) \|_2  \big) .
\end{align*}
Then by recalling  the Lipschitz continuity of $\boldsymbol{F}_{\boldsymbol{\theta}}$ in \eqref{lip_F}, using Lemma \ref{lem-1} and Assumption \ref{assu-2}(2),   we have \[|l(\boldsymbol{\theta}_1;\boldsymbol{y},\boldsymbol{u}) - l(\boldsymbol{\theta}_2;\boldsymbol{y},\boldsymbol{u}) | \leq C \| \boldsymbol{\theta}_1 - \boldsymbol{\theta}_2 \| _2,\]
where $C=E(2M_2+5aB_2)$.
\hfill $\Box$

\section{Proof of Proposition \ref{conv}}
To prove Proposition \ref{conv}, we will give the following lemma.
\begin{lemma}
	\label{lem-3}
	Let Assumption \ref{assu-4} hold. We obtain that

(1) $\nabla \boldsymbol{g}^T\nabla \boldsymbol{g}$ is reversible.

(2) \begin{equation}
			\begin{aligned} \label{rdef}
				\boldsymbol{R} :=  \boldsymbol{I}-\nabla \boldsymbol{g}(\nabla \boldsymbol{g}^T\nabla \boldsymbol{g})^{-1}\nabla \boldsymbol{g}^T
			\end{aligned}
		\end{equation}
\noindent is  positive semidefinite and moreover, both
 $\nabla \boldsymbol{h}^T\boldsymbol{R}\nabla \boldsymbol{h}  $ and $\nabla \boldsymbol{h}^T\boldsymbol{R}\nabla \boldsymbol{h} + \boldsymbol{H}^2$ are reversible.

(3) \begin{equation}
			\begin{aligned} \label{qdef}
				\boldsymbol{Q} :=  \boldsymbol{R}-\boldsymbol{R}\nabla \boldsymbol{h}(\nabla \boldsymbol{h}^T\boldsymbol{R}\nabla \boldsymbol{h} + \boldsymbol{H}^2)^{-1}\nabla \boldsymbol{h}^T \boldsymbol{R}
			\end{aligned}
		\end{equation}
\noindent is a positive semidefinite matrix.
\end{lemma}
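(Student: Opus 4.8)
The plan is to recognize $\boldsymbol{R}$ in \eqref{rdef} as the orthogonal projector onto the orthogonal complement of $\mathrm{range}(\nabla\boldsymbol{g})$ and then to read off (2) from a rank argument and (3) from a Schur‑complement identity. For part (1), note that Assumption \ref{assu-4} makes $[\nabla\boldsymbol{h},\nabla\boldsymbol{g}]$ full column rank, so in particular its submatrix $\nabla\boldsymbol{g}\in\mathcal{R}^{n\times p}$ has full column rank $p$; I would then invoke the standard fact that $A^TA$ is positive definite (hence invertible) whenever $A$ has full column rank --- for $\boldsymbol{w}\neq\boldsymbol{0}$ one has $\boldsymbol{w}^TA^TA\boldsymbol{w}=\|A\boldsymbol{w}\|_2^2>0$ since $A\boldsymbol{w}\neq\boldsymbol{0}$ --- applied to $A=\nabla\boldsymbol{g}$.

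For part (2), I would first observe that $\boldsymbol{P}:=\nabla\boldsymbol{g}(\nabla\boldsymbol{g}^T\nabla\boldsymbol{g})^{-1}\nabla\boldsymbol{g}^T$ is symmetric and idempotent with $\mathrm{range}(\boldsymbol{P})=\mathrm{range}(\nabla\boldsymbol{g})$, so $\boldsymbol{R}=\boldsymbol{I}-\boldsymbol{P}$ is again symmetric idempotent with $\ker\boldsymbol{R}=\mathrm{range}(\nabla\boldsymbol{g})$; then $\boldsymbol{x}^T\boldsymbol{R}\boldsymbol{x}=\boldsymbol{x}^T\boldsymbol{R}^2\boldsymbol{x}=\|\boldsymbol{R}\boldsymbol{x}\|_2^2\geq 0$, i.e. $\boldsymbol{R}$ is positive semidefinite. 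Using $\boldsymbol{R}=\boldsymbol{R}^T\boldsymbol{R}$, write $\nabla\boldsymbol{h}^T\boldsymbol{R}\nabla\boldsymbol{h}=(\boldsymbol{R}\nabla\boldsymbol{h})^T(\boldsymbol{R}\nabla\boldsymbol{h})$, so it suffices to show $\boldsymbol{R}\nabla\boldsymbol{h}$ has full column rank: if $\boldsymbol{R}\nabla\boldsymbol{h}\boldsymbol{w}=\boldsymbol{0}$ then $\nabla\boldsymbol{h}\boldsymbol{w}\in\ker\boldsymbol{R}=\mathrm{range}(\nabla\boldsymbol{g})$, hence $\nabla\boldsymbol{h}\boldsymbol{w}=\nabla\boldsymbol{g}\boldsymbol{u}$ for some $\boldsymbol{u}$ and therefore $[\nabla\boldsymbol{h},\nabla\boldsymbol{g}]\,\mathrm{col}(\boldsymbol{w},-\boldsymbol{u})=\boldsymbol{0}$, which forces $\boldsymbol{w}=\boldsymbol{0}$ by Assumption \ref{assu-4}. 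Thus $\nabla\boldsymbol{h}^T\boldsymbol{R}\nabla\boldsymbol{h}$ is positive definite, and adding the positive semidefinite diagonal matrix $\boldsymbol{H}^2=\mathrm{diag}(h_1^2,\cdots,h_m^2)$ keeps it positive definite; both matrices are therefore invertible.

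For part (3), I would recognize $\boldsymbol{Q}$ as the Schur complement of the (by part (2)) positive definite block $\nabla\boldsymbol{h}^T\boldsymbol{R}\nabla\boldsymbol{h}+\boldsymbol{H}^2$ in
\[
\boldsymbol{M}:=\begin{bmatrix}\nabla\boldsymbol{h}^T\boldsymbol{R}\nabla\boldsymbol{h}+\boldsymbol{H}^2 & \nabla\boldsymbol{h}^T\boldsymbol{R}\\ \boldsymbol{R}\nabla\boldsymbol{h} & \boldsymbol{R}\end{bmatrix}=\begin{bmatrix}\nabla\boldsymbol{h}^T\\ \boldsymbol{I}\end{bmatrix}\boldsymbol{R}\begin{bmatrix}\nabla\boldsymbol{h} & \boldsymbol{I}\end{bmatrix}+\begin{bmatrix}\boldsymbol{H}^2 & \boldsymbol{0}\\ \boldsymbol{0} & \boldsymbol{0}\end{bmatrix}.
\]
The right‑hand side is a sum of two positive semidefinite matrices ($\boldsymbol{R}$ being positive semidefinite by part (2), so $\boldsymbol{C}^T\boldsymbol{R}\boldsymbol{C}\succeq 0$ for $\boldsymbol{C}=[\nabla\boldsymbol{h}\ \ \boldsymbol{I}]$, and the second summand a PSD diagonal block), hence $\boldsymbol{M}$ is positive semidefinite; since its $(1,1)$ block is positive definite, the Schur‑complement criterion for positive semidefiniteness yields that $\boldsymbol{Q}$ is positive semidefinite. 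Equivalently, one could argue directly that for every $\boldsymbol{x}$, $\boldsymbol{x}^T\boldsymbol{Q}\boldsymbol{x}=\min_{\boldsymbol{w}}\{(\nabla\boldsymbol{h}\boldsymbol{w}+\boldsymbol{x})^T\boldsymbol{R}(\nabla\boldsymbol{h}\boldsymbol{w}+\boldsymbol{x})+\boldsymbol{w}^T\boldsymbol{H}^2\boldsymbol{w}\}\geq 0$, with minimizer $\boldsymbol{w}=-(\nabla\boldsymbol{h}^T\boldsymbol{R}\nabla\boldsymbol{h}+\boldsymbol{H}^2)^{-1}\nabla\boldsymbol{h}^T\boldsymbol{R}\boldsymbol{x}$.

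All of this is routine linear algebra; the only step where the full strength of Assumption \ref{assu-4} --- joint column full rank of $[\nabla\boldsymbol{h},\nabla\boldsymbol{g}]$ rather than of $\nabla\boldsymbol{h}$ and $\nabla\boldsymbol{g}$ individually --- is genuinely needed is the kernel argument establishing invertibility of $\nabla\boldsymbol{h}^T\boldsymbol{R}\nabla\boldsymbol{h}$ in part (2), so that is the point I would be most careful about.
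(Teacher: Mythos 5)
Your proposal is correct, and for parts (2) and (3) it takes a genuinely different route from the paper. The paper proves everything computationally via the singular value decomposition $\nabla\boldsymbol{g}=\boldsymbol{U}\bigl[\begin{smallmatrix}\boldsymbol{\Sigma}_p\\ \boldsymbol{0}\end{smallmatrix}\bigr]\boldsymbol{V}^T$: it derives the explicit form $\boldsymbol{R}=\boldsymbol{U}\bigl[\begin{smallmatrix}\boldsymbol{0}&\boldsymbol{0}\\ \boldsymbol{0}&\boldsymbol{I}_{n-p}\end{smallmatrix}\bigr]\boldsymbol{U}^T$ to get positive semidefiniteness and $\boldsymbol{R}^2=\boldsymbol{R}$, establishes the full column rank of $\boldsymbol{R}\nabla\boldsymbol{h}$ by a contradiction argument built on the rank of $\bigl[\boldsymbol{U}^T\nabla\boldsymbol{h},\ \bigl[\begin{smallmatrix}\boldsymbol{\Sigma}_p\\ \boldsymbol{0}\end{smallmatrix}\bigr]\bigr]$, and for (3) splits $\boldsymbol{Q}=\boldsymbol{Q}_1+(\boldsymbol{Q}-\boldsymbol{Q}_1)$ with $\boldsymbol{Q}_1:=\boldsymbol{R}-\boldsymbol{R}\nabla\boldsymbol{h}(\nabla\boldsymbol{h}^T\boldsymbol{R}\nabla\boldsymbol{h})^{-1}\nabla\boldsymbol{h}^T\boldsymbol{R}$, using the Woodbury inversion formula to exhibit $\boldsymbol{Q}-\boldsymbol{Q}_1$ as a congruence of a positive definite matrix. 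Your argument replaces all of this with coordinate-free facts: $\boldsymbol{R}$ is the orthogonal projector onto $\mathrm{range}(\nabla\boldsymbol{g})^{\perp}$, so the kernel identity $\ker\boldsymbol{R}=\mathrm{range}(\nabla\boldsymbol{g})$ reduces the column-rank claim for $\boldsymbol{R}\nabla\boldsymbol{h}$ to a one-line application of Assumption \ref{assu-4}; and the Schur-complement (equivalently, the variational) characterization $\boldsymbol{x}^T\boldsymbol{Q}\boldsymbol{x}=\min_{\boldsymbol{w}}\{(\nabla\boldsymbol{h}\boldsymbol{w}+\boldsymbol{x})^T\boldsymbol{R}(\nabla\boldsymbol{h}\boldsymbol{w}+\boldsymbol{x})+\boldsymbol{w}^T\boldsymbol{H}^2\boldsymbol{w}\}\geq 0$ dispenses with Woodbury entirely. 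Your version is shorter, avoids the SVD bookkeeping and the somewhat delicate contradiction step, and makes transparent exactly where the joint full-rank hypothesis enters; the paper's version has the minor advantage of producing explicit matrix formulas that it partly reuses elsewhere. The only cosmetic caveat is that your auxiliary symbol $\boldsymbol{P}$ for the projector $\nabla\boldsymbol{g}(\nabla\boldsymbol{g}^T\nabla\boldsymbol{g})^{-1}\nabla\boldsymbol{g}^T$ collides with the paper's later definition $\boldsymbol{P}:=\nabla\boldsymbol{h}^T\boldsymbol{R}\nabla\boldsymbol{h}+\boldsymbol{H}^2$ in the proof of Proposition \ref{conv}, so you would want to rename it if the proofs are to coexist.
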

\begin{proof}
(1) Because $\nabla \boldsymbol{g}$ is a column full rank matrix, by the Singular Value Decomposition (SVD), we have
    \begin{align} \label{411}
          \nabla \boldsymbol{g} = \boldsymbol{U} \begin{bmatrix}
            \boldsymbol{\Sigma} _p\\
           \boldsymbol{0}
           \end{bmatrix}\boldsymbol{V}^T,
    \end{align}
where $\boldsymbol{U}$ and $\boldsymbol{V}$ are orthogonal matrices, and $\boldsymbol{\Sigma} _p$ is a $p$ dimensional diagonal matrix with non-zero diagonal elements.
Thus, $\nabla \boldsymbol{g}^T\nabla \boldsymbol{g} = \boldsymbol{V\Sigma}_p^2\boldsymbol{V}^T$ is reversible and positive definite.

(2) Putting  (\ref{411}) into $\boldsymbol{R}$, we have
	\begin{equation} \label{412}
		\begin{aligned}
			\boldsymbol{R} & =\boldsymbol{I}-\nabla \boldsymbol{g}(\nabla \boldsymbol{g}^T\nabla \boldsymbol{g})^{-1}\nabla \boldsymbol{g}^T \\
			& = \boldsymbol{I}- \boldsymbol{U}\begin{bmatrix}
				\boldsymbol{\Sigma} _p\\
				\boldsymbol{0}
			 \end{bmatrix}\boldsymbol{V}^T(\boldsymbol{V}\boldsymbol{\Sigma}_p^2\boldsymbol{V}^T)^{-1}\boldsymbol{V}\begin{bmatrix}
			  \boldsymbol{\Sigma} _p & \boldsymbol{0}
			 \end{bmatrix}\boldsymbol{U}^T \\
			 & = \boldsymbol{I}- \boldsymbol{U}\begin{bmatrix}
				\boldsymbol{\Sigma} _p\\
				\boldsymbol{0}
			 \end{bmatrix}\boldsymbol{V}^T\boldsymbol{V}\boldsymbol{\Sigma}_p^{-2}\boldsymbol{V}^T\boldsymbol{V}\begin{bmatrix}
				\boldsymbol{\Sigma} _p & \boldsymbol{0}
			 \end{bmatrix}U^T \\
			 & = \boldsymbol{I}- \boldsymbol{U}\begin{bmatrix}
				\boldsymbol{\Sigma} _p\\
				\boldsymbol{0}
			 \end{bmatrix}\boldsymbol{\Sigma}_p^{-2}\begin{bmatrix}
				\boldsymbol{\Sigma} _p & \boldsymbol{0}
			 \end{bmatrix}\boldsymbol{U}^T \\&
=\boldsymbol{I}- \boldsymbol{U}\begin{bmatrix}
			  \boldsymbol{I}_p & \boldsymbol{0}\\
			  \boldsymbol{0} & \boldsymbol{0}
			 \end{bmatrix}\boldsymbol{U}^T
 = \boldsymbol{U}\begin{bmatrix}
			  \boldsymbol{0} & \boldsymbol{0}\\
			  \boldsymbol{0} & \boldsymbol{I}_{n-p}
			 \end{bmatrix}\boldsymbol{U}^T.
		\end{aligned}
	\end{equation}
Thus, $\boldsymbol{R}$ is a positive semidefinite matrix. Note that $\boldsymbol{R}^2 = \boldsymbol{U}\begin{bmatrix}
    \boldsymbol{0} & \boldsymbol{0}\\
    \boldsymbol{0} & \boldsymbol{I}_{n-p}
     \end{bmatrix}\boldsymbol{U}^T\boldsymbol{U}\begin{bmatrix}
    \boldsymbol{0} & \boldsymbol{0}\\
    \boldsymbol{0} & \boldsymbol{I}_{n-p}
     \end{bmatrix}\boldsymbol{U}^T=\boldsymbol{R}$.

In the following, we will show that $\boldsymbol{R}\nabla \boldsymbol{h}$ is a column full rank matrix by contradiction. Suppose that there exists a non-zero vector $\boldsymbol{x} \in \mathcal{R}^{m}$ such that $\boldsymbol{R}\nabla \boldsymbol{hx} = 0$, i.e., $\boldsymbol{R}\nabla \boldsymbol{hx}=\boldsymbol{U}\begin{bmatrix}
	\boldsymbol{0} & \boldsymbol{0}\\
	\boldsymbol{0} & \boldsymbol{I}_{n-p}
 \end{bmatrix}\boldsymbol{U}^T\nabla \boldsymbol{hx} = 0$.
Since $\boldsymbol{U} \in \mathcal{R}^{n \times n}$ is a full rank square matrix, we have
      \begin{align} \label{413}
        \begin{bmatrix}
          \boldsymbol{0} & \boldsymbol{0}\\
          \boldsymbol{0} & \boldsymbol{I}_{n-p}
         \end{bmatrix}\boldsymbol{U}^T\nabla \boldsymbol{hx} = \boldsymbol{0}.
      \end{align}
      Note that
      \begin{align}
          \begin{bmatrix}
             \boldsymbol{U}^T\nabla \boldsymbol{h} & \begin{bmatrix}
                 \boldsymbol{\Sigma} _p\\
                \boldsymbol{0}
                \end{bmatrix}
          \end{bmatrix}  =& \boldsymbol{U}^T\begin{bmatrix}
           \nabla \boldsymbol{h} & \boldsymbol{U} \begin{bmatrix}
             \boldsymbol{\Sigma} _p\\
            \boldsymbol{0}
            \end{bmatrix}\boldsymbol{V}^T
          \end{bmatrix}\begin{bmatrix}
           \boldsymbol{I}_p & \boldsymbol{0} \\
           \boldsymbol{0}  & \boldsymbol{V}
          \end{bmatrix}  \notag \\
          =& \boldsymbol{U}^T\begin{bmatrix}
           \nabla \boldsymbol{h} & \nabla \boldsymbol{g}
          \end{bmatrix}\begin{bmatrix}
           \boldsymbol{I}_p & \boldsymbol{0} \\
           \boldsymbol{0}  & \boldsymbol{V}
          \end{bmatrix}.
       \end{align}
Then under Assumption \ref{assu-4}, we conclude that
\[\mathrm{rank}\left(\begin{bmatrix}
	\boldsymbol{U}^T\nabla \boldsymbol{h} & \begin{bmatrix}
	  \boldsymbol{\Sigma} _p\\
	 \boldsymbol{0}
	 \end{bmatrix}
   \end{bmatrix}\right)
   = \mathrm{rank}\left(\begin{bmatrix}
	\nabla \boldsymbol{h} & \nabla \boldsymbol{g}
   \end{bmatrix}\right)
   =m + p.\]
Therefore,
  $\begin{bmatrix}
      \boldsymbol{U}^T\nabla \boldsymbol{h} & \begin{bmatrix}
          \boldsymbol{\Sigma} _p\\
         \boldsymbol{0}
         \end{bmatrix}
     \end{bmatrix}$ is a column full rank matrix. Furthermore, we have $\begin{bmatrix}
	\boldsymbol{U}^T\nabla \boldsymbol{h} & \begin{bmatrix}
	  \boldsymbol{\Sigma} _p\\
	 \boldsymbol{0}
	 \end{bmatrix}
   \end{bmatrix}\begin{bmatrix}
	\boldsymbol{x} \\
	-\begin{bmatrix}
	  \boldsymbol{\Sigma}_p^{-1} & \boldsymbol{0}
	 \end{bmatrix} \boldsymbol{U}^T\nabla \boldsymbol{hx}
  \end{bmatrix} \\
  = \boldsymbol{U}^T\nabla \boldsymbol{hx} - \begin{bmatrix}
	\boldsymbol{\Sigma} _p\\
   \boldsymbol{0}
   \end{bmatrix} \begin{bmatrix}
	\boldsymbol{\Sigma}_p^{-1} & \boldsymbol{0}
   \end{bmatrix} \boldsymbol{U}^T\nabla \boldsymbol{hx}=\begin{bmatrix}
	\boldsymbol{0} & \boldsymbol{0}\\
	\boldsymbol{0} & \boldsymbol{I}_{n-p}
   \end{bmatrix}\boldsymbol{U}^T\nabla \boldsymbol{hx}
   \overset{(\ref{413})}{=}  \boldsymbol{0}$.
Since $\begin{bmatrix}
      \boldsymbol{U}^T\nabla \boldsymbol{h} & \begin{bmatrix}
          \boldsymbol{\Sigma} _p\\
         \boldsymbol{0}
         \end{bmatrix}
     \end{bmatrix}$ is a column full rank matrix, we get $\begin{bmatrix}
        \boldsymbol{x} \\
        -\begin{bmatrix}
          \boldsymbol{\Sigma}_p^{-1} & \boldsymbol{0}
         \end{bmatrix} \boldsymbol{U}^T\nabla \boldsymbol{hx}
      \end{bmatrix} = \boldsymbol{0}$.
This contradicts that $\boldsymbol{x}$ is a non-zero vector, so $\boldsymbol{R}\nabla \boldsymbol{h}$ is a column full rank matrix.

Similar to the method of proving that $\nabla \boldsymbol{g}^T\nabla \boldsymbol{g}$ is reversible and positive definite, $\nabla \boldsymbol{h}^T\boldsymbol{RR}\nabla \boldsymbol{h}$ is reversible and positive definite. Thus, $\nabla \boldsymbol{h}^T\boldsymbol{R}\nabla \boldsymbol{h}$ is reversible and positive definite because $\boldsymbol{R}^2=\boldsymbol{R}$. Furthermore, $\nabla \boldsymbol{h}^T\boldsymbol{R}\nabla \boldsymbol{h} + \boldsymbol{H}^2$ is also reversible and positive definite.

(3) Denote $\boldsymbol{Q}_1 :=  \boldsymbol{R}-\boldsymbol{R}\nabla \boldsymbol{h}(\nabla \boldsymbol{h}^T\boldsymbol{R}\nabla \boldsymbol{h})^{-1}\nabla \boldsymbol{h}^T \boldsymbol{R}$. Because $\boldsymbol{R}^2=\boldsymbol{R}$, we have
   \begin{align}\label{def-Q1}
     \boldsymbol{Q}_1 &= \boldsymbol{R}^2-\boldsymbol{R}^2\nabla \boldsymbol{h}(\nabla \boldsymbol{h}^T\boldsymbol{R}^2\nabla \boldsymbol{h})^{-1}\nabla \boldsymbol{h}^T \boldsymbol{R}^2 \\
     &= \boldsymbol{R} \left\{\boldsymbol{I} - (\boldsymbol{R}\nabla \boldsymbol{h})\left [(\boldsymbol{R}\nabla \boldsymbol{h})^T(\boldsymbol{R}\nabla \boldsymbol{h})\right ]^{-1}(\boldsymbol{R}\nabla \boldsymbol{h})^T\right \} \boldsymbol{R}. \notag
   \end{align}
Since  $\boldsymbol{R}\nabla \boldsymbol{h}$ is  column full rank,
similarly to the method of proving that $\boldsymbol{R}$ is positive semidefinite in (\ref{412}), we can also show that $\boldsymbol{I} - (\boldsymbol{R}\nabla \boldsymbol{h})\left [(\boldsymbol{R}\nabla \boldsymbol{h})^T(\boldsymbol{R}\nabla \boldsymbol{h})\right ]^{-1}(\boldsymbol{R}\nabla \boldsymbol{h})^T$ is positive semidefinite. Thus, $\boldsymbol{Q}_1$ is positive semidefinite.

In the following, we will prove that $\boldsymbol{Q}$ is positive semidefinite by showing that $\boldsymbol{y}^T(\boldsymbol{Q} - \boldsymbol{Q}_1)\boldsymbol{y} \geq 0$ for every $\boldsymbol{y} \in \mathcal{R}^n$.
According to Woodbury Matrix Inversion Formula $\left(\boldsymbol{Z}+\boldsymbol{XY}\right)^{-1} = \boldsymbol{Z}^{-1}-\boldsymbol{Z}^{-1}\boldsymbol{X}(\boldsymbol{I}+\boldsymbol{YZ}^{-1}\boldsymbol{X})^{-1}\boldsymbol{YZ}^{-1}$,
with $\boldsymbol{Z}=\nabla \boldsymbol{h}^T\boldsymbol{R}\nabla \boldsymbol{h}$ and $\boldsymbol{X}=\boldsymbol{Y}=\boldsymbol{H}$, we have $\left(\nabla \boldsymbol{h}^T\boldsymbol{R}\nabla \boldsymbol{h} + \boldsymbol{H}^2\right)^{-1} = \left(\nabla \boldsymbol{h}^T\boldsymbol{R}\nabla \boldsymbol{h}\right)^{-1} - \left(\nabla \boldsymbol{h}^T\boldsymbol{R}\nabla \boldsymbol{h}\right)^{-1}\boldsymbol{H}\left(\boldsymbol{I}+\boldsymbol{H}(\nabla \boldsymbol{h}^T\boldsymbol{R}\nabla \boldsymbol{h})^{-1}\boldsymbol{H}\right)^{-1}\boldsymbol{H}\left(\nabla \boldsymbol{h}^T\boldsymbol{R}\nabla \boldsymbol{h}\right)^{-1}$.
This, together with \eqref{def-Q1} and \eqref{qdef}, implies that
\begin{align*}
	& \boldsymbol{Q} - \boldsymbol{Q}_1 = \left[\boldsymbol{H}(\nabla \boldsymbol{h}^T\boldsymbol{R}\nabla \boldsymbol{h})^{-1}\nabla \boldsymbol{h}^T\boldsymbol{R}\right]^T \\
  & \cdot \left(\boldsymbol{I}+\boldsymbol{H}(\nabla \boldsymbol{h}^T\boldsymbol{R}\nabla \boldsymbol{h})^{-1}\boldsymbol{H}\right)^{-1}\left[\boldsymbol{H}(\nabla \boldsymbol{h}^T\boldsymbol{R}\nabla \boldsymbol{h})^{-1}\nabla \boldsymbol{h}^T\boldsymbol{R}\right].
\end{align*}
Because $(\nabla \boldsymbol{h}^T\boldsymbol{R}\nabla \boldsymbol{h})^{-1} = \boldsymbol{V}^{'}\boldsymbol{\Sigma}_p^{'-2}\boldsymbol{V}^{'T}$ is the positive definite matrix from SVD ($\boldsymbol{V}^{'}$ is an orthogonal square matrix, and $\boldsymbol{\Sigma}_p^{'-2}$ is a diagonal matrix with positive diagonal elements), we get that $\left(\boldsymbol{I}+\boldsymbol{H}(\nabla \boldsymbol{h}^T\boldsymbol{R}\nabla \boldsymbol{h})^{-1}\boldsymbol{H}\right)^{-1}$ is positive definite.
Then for every $\boldsymbol{y} \in \mathcal{R}^n$, we obtain that
\begin{align*}
	&\boldsymbol{y}^T(\boldsymbol{Q} - \boldsymbol{Q}_1)\boldsymbol{y} = \left[\boldsymbol{H}(\nabla \boldsymbol{h}^T\boldsymbol{R}\nabla \boldsymbol{h})^{-1}\nabla \boldsymbol{h}^T\boldsymbol{Ry}\right]^T \\
  & \cdot \left(\boldsymbol{I}+\boldsymbol{H}(\nabla \boldsymbol{h}^T\boldsymbol{R}\nabla \boldsymbol{h})^{-1}\boldsymbol{H}\right)^{-1}\left[\boldsymbol{H}(\nabla \boldsymbol{h}^T\boldsymbol{R}\nabla \boldsymbol{h})^{-1}\nabla \boldsymbol{h}^T\boldsymbol{Ry}\right] \geq 0.
\end{align*}
Therefore, $\boldsymbol{Q}$ is a positive semidefinite matrix.
\end{proof}

\emph{Proof of Proposition \ref{conv}}:
Define  $\boldsymbol{k} := \left(\nabla \boldsymbol{g}^T\nabla \boldsymbol{g}\right)^{-1}\nabla \boldsymbol{g}^T(\boldsymbol{F}_{\boldsymbol{\theta}} + \nabla \boldsymbol{h\lambda})$,
where   $\nabla \boldsymbol{g}^T\nabla \boldsymbol{g}$ is reversible from Lemma \ref{lem-3}(1).
Then   $L\left(\boldsymbol{\theta}, \boldsymbol{\lambda}, \boldsymbol{\nu};\boldsymbol{y}, \boldsymbol{u}\right)$ can be rewritten as
\begin{equation} \label{42qw0}
	\begin{aligned}
   & \left\|\boldsymbol{F}_{\boldsymbol{\theta}} + \nabla \boldsymbol{h}\boldsymbol{\lambda} + \nabla \boldsymbol{g} \boldsymbol{\nu}\right\|_2^2 + \left\|\boldsymbol{H}\boldsymbol{\lambda}\right\|_2^2 + \|\boldsymbol{g}\|_2^2 \\
  =&\boldsymbol{\nu}^T\nabla \boldsymbol{g}^T\nabla \boldsymbol{g} \boldsymbol{\nu}+2(\boldsymbol{F}_{\boldsymbol{\theta}} + \nabla \boldsymbol{h}\boldsymbol{\lambda} )^T \nabla \boldsymbol{g} \boldsymbol{\nu}
 \\&+\boldsymbol{F}_{\boldsymbol{\theta}}^T\boldsymbol{F}_{\boldsymbol{\theta}}+  2\boldsymbol{\lambda}^T\nabla\boldsymbol{h}^T\boldsymbol{F}_{\boldsymbol{\theta}} +\boldsymbol{\lambda}^T \nabla\boldsymbol{h}^T\nabla\boldsymbol{h} \boldsymbol{\lambda} + \left\|\boldsymbol{H}\boldsymbol{\lambda}\right\|_2^2 + \|\boldsymbol{g}\|_2^2
  \\= & (\boldsymbol{\nu} + \boldsymbol{k})^T\nabla \boldsymbol{g}^T\nabla \boldsymbol{g}(\boldsymbol{\nu} + \boldsymbol{k}) - \boldsymbol{k}^T\nabla\boldsymbol{g}^T\nabla\boldsymbol{g}\boldsymbol{k}
 \\&+  \boldsymbol{\lambda}^T(\boldsymbol{H}^2+ \nabla\boldsymbol{h}^T\nabla\boldsymbol{h} ) \boldsymbol{\lambda} + 2\boldsymbol{\lambda}^T\nabla\boldsymbol{h}^T\boldsymbol{F}_{\boldsymbol{\theta}} + \boldsymbol{F}_{\boldsymbol{\theta}}^T\boldsymbol{F}_{\boldsymbol{\theta}} + \boldsymbol{g}^T\boldsymbol{g}.
	\end{aligned}
\end{equation}
Note that
\begin{align}\label{kgk}
-&\boldsymbol{k}^T\nabla\boldsymbol{g}^T\nabla\boldsymbol{g}\boldsymbol{k}
=-(\boldsymbol{F}_{\boldsymbol{\theta}} + \nabla \boldsymbol{h\lambda})^T \nabla \boldsymbol{g}(\nabla \boldsymbol{g}^T\nabla \boldsymbol{g})^{-1}\nabla \boldsymbol{g}^T  (\boldsymbol{F}_{\boldsymbol{\theta}} + \nabla \boldsymbol{h\lambda}) \notag
\\& \overset{\eqref{rdef}}{=}(\boldsymbol{F}_{\boldsymbol{\theta}} + \nabla \boldsymbol{h\lambda})^T ( \boldsymbol{R}-\boldsymbol{I}) (\boldsymbol{F}_{\boldsymbol{\theta}} + \nabla \boldsymbol{h\lambda})
\\&=\boldsymbol{F}_{\boldsymbol{\theta}}^T ( \boldsymbol{R}-\boldsymbol{I})  \boldsymbol{F}_{\boldsymbol{\theta}}+2\boldsymbol{\lambda}^T\nabla\boldsymbol{h}^T( \boldsymbol{R}-\boldsymbol{I}) \boldsymbol{F}_{\boldsymbol{\theta}} +\boldsymbol{\lambda}^T \nabla\boldsymbol{h}^T( \boldsymbol{R}-\boldsymbol{I}) \nabla\boldsymbol{h} \boldsymbol{\lambda}  .\notag
\end{align}
Define $\boldsymbol{P} := \nabla \boldsymbol{h}^T\boldsymbol{R}\nabla \boldsymbol{h} + \boldsymbol{H}^2$, which is reversible from Lemma \ref{lem-3}(2).
Then by substituting  \eqref{kgk} into   \eqref{42qw0}, we obtain that
\begin{equation} \label{42qw}
	\begin{aligned}
   & L\left(\boldsymbol{\theta}, \boldsymbol{\lambda}, \boldsymbol{\nu};\boldsymbol{y}, \boldsymbol{u}\right)
  =  (\boldsymbol{\nu} + \boldsymbol{k})^T\nabla \boldsymbol{g}^T\nabla \boldsymbol{g}(\boldsymbol{\nu} + \boldsymbol{k}) + \boldsymbol{g}^T\boldsymbol{g}
 \\&+  \boldsymbol{\lambda}^T\boldsymbol{P} \boldsymbol{\lambda} + 2\boldsymbol{\lambda}^T\nabla\boldsymbol{h}^T \boldsymbol{R}\boldsymbol{F}_{\boldsymbol{\theta}} + \boldsymbol{F}_{\boldsymbol{\theta}}^T \boldsymbol{R} \boldsymbol{F}_{\boldsymbol{\theta}}\\
	\overset{\eqref{qdef}}{=} & \left[ \boldsymbol{\lambda} + \boldsymbol{P}^{-1}\nabla \boldsymbol{h}^T\boldsymbol{RF}_{\boldsymbol{\theta}}\right]^T\boldsymbol{P} \left[ \boldsymbol{\lambda} + \boldsymbol{P}^{-1}\nabla \boldsymbol{h}^T\boldsymbol{RF}_{\boldsymbol{\theta}} \right] \\
		& + (\boldsymbol{\nu} + \boldsymbol{k})^T\nabla \boldsymbol{g}^T\nabla \boldsymbol{g}(\boldsymbol{\nu} + \boldsymbol{k}) + \boldsymbol{F}_{\boldsymbol{\theta}}^T\boldsymbol{QF}_{\boldsymbol{\theta}} + \boldsymbol{g}^T\boldsymbol{g}.
	\end{aligned}
\end{equation}
Thus, we have
\begin{equation*}
	\begin{aligned}
		&l(\boldsymbol{\theta};\boldsymbol{y},\boldsymbol{u}) = \min_{\boldsymbol{\lambda} \geq \boldsymbol{0}, \boldsymbol{\nu}} L\left(\boldsymbol{\theta}, \boldsymbol{\lambda}, \boldsymbol{\nu};\boldsymbol{y}, \boldsymbol{u}\right) \\
		\overset{(\ref{42qw})}{=} &\min_{\boldsymbol{\lambda} \geq \boldsymbol{0}, \boldsymbol{\nu}} \left[ \boldsymbol{\lambda} + \boldsymbol{P}^{-1}\nabla \boldsymbol{h}^T\boldsymbol{RF}_{\boldsymbol{\theta}}\right]^T\boldsymbol{P} \left[ \boldsymbol{\lambda} + \boldsymbol{P}^{-1}\nabla \boldsymbol{h}^T\boldsymbol{RF}_{\boldsymbol{\theta}} \right] \\
		& + (\boldsymbol{\nu} + \boldsymbol{k})^T\nabla \boldsymbol{g}^T\nabla \boldsymbol{g}(\boldsymbol{\nu} + \boldsymbol{k}) + \boldsymbol{F}_{\boldsymbol{\theta}}^T\boldsymbol{QF}_{\boldsymbol{\theta}} + \boldsymbol{g}^T\boldsymbol{g} \\
		\overset{\boldsymbol{v}=-\boldsymbol{k}}{=} & \min_{\boldsymbol{\lambda} \geq \boldsymbol{0}} \left[ \boldsymbol{\lambda} + \boldsymbol{P}^{-1}\nabla \boldsymbol{h}^T\boldsymbol{RF}_{\boldsymbol{\theta}}\right]^T\boldsymbol{P} \left[ \boldsymbol{\lambda} + \boldsymbol{P}^{-1}\nabla \boldsymbol{h}^T\boldsymbol{RF}_{\boldsymbol{\theta}} \right]
\\& + \boldsymbol{F}_{\boldsymbol{\theta}}^T\boldsymbol{QF}_{\boldsymbol{\theta}} + \boldsymbol{g}^T\boldsymbol{g}.
	\end{aligned}
\end{equation*}

For every $\theta_1, \theta_2 \in  \Theta$ and every $\alpha,\beta \geq 0 ,\alpha + \beta = 1$, let $\boldsymbol{\lambda}_{\alpha \boldsymbol{\theta}_1 + \beta \boldsymbol{\theta}_2}^{\ast}$, $\boldsymbol{\lambda}_{\boldsymbol{\theta}_1}^{\ast}$ and $\boldsymbol{\lambda}_{\boldsymbol{\theta}_2}^{\ast}$ be the optimal solutions of the optimization problem (\ref{op_pro}) corresponding to $\alpha \boldsymbol{\theta}_1 + \beta \boldsymbol{\theta}_2, \boldsymbol{\theta}_1$ and $\boldsymbol{\theta}_2$, respectively. we have
	\begin{equation}\label{wer}
		\begin{aligned}
			&l(\alpha \boldsymbol{\theta}_1 + \beta \boldsymbol{\theta}_2;\boldsymbol{y},\boldsymbol{u}) =  \left[ \boldsymbol{\lambda}_{\alpha \boldsymbol{\theta}_1 + \beta \boldsymbol{\theta}_2}^{\ast} + \boldsymbol{P}^{-1}\nabla \boldsymbol{h}^T\boldsymbol{RF}_{\alpha \boldsymbol{\theta}_1 + \beta \boldsymbol{\theta}_2}\right]^T \\
			&\cdot \boldsymbol{P} \left[\boldsymbol{\lambda}_{\alpha \boldsymbol{\theta}_1 + \beta \boldsymbol{\theta}_2}^{\ast} + \boldsymbol{P}^{-1}\nabla \boldsymbol{h}^T\boldsymbol{RF}_{\alpha \boldsymbol{\theta}_1 + \beta \boldsymbol{\theta}_2}\right] + \boldsymbol{F}_{\alpha \boldsymbol{\theta}_1 + \beta \boldsymbol{\theta}_2}^T\boldsymbol{QF}_{\alpha \boldsymbol{\theta}_1 + \beta \boldsymbol{\theta}_2} + \boldsymbol{g}^T\boldsymbol{g}\\
			\leq & \left[ \alpha \boldsymbol{\lambda}_{\boldsymbol{\theta}_1}^{\ast} + \beta \boldsymbol{\lambda}_{\boldsymbol{\theta}_2}^{\ast} +\boldsymbol{P}^{-1}\nabla \boldsymbol{h}^T\boldsymbol{RF}_{\alpha \boldsymbol{\theta}_1 + \beta \boldsymbol{\theta}_2}\right]^T\\
			&\cdot \boldsymbol{P}\left[\alpha \boldsymbol{\lambda}_{\boldsymbol{\theta}_1}^{\ast} + \beta \boldsymbol{\lambda}_{\boldsymbol{\theta}_2}^{\ast} +\boldsymbol{P}^{-1}\nabla \boldsymbol{h}^T\boldsymbol{RF}_{\alpha \boldsymbol{\theta}_1 + \beta \boldsymbol{\theta}_2} \right] + \boldsymbol{F}_{\alpha \boldsymbol{\theta}_1 + \beta \boldsymbol{\theta}_2}^T\boldsymbol{QF}_{\alpha \boldsymbol{\theta}_1 + \beta \boldsymbol{\theta}_2} + \boldsymbol{g}^T\boldsymbol{g} \\
			=& \left[ \alpha(\boldsymbol{\lambda}_{\boldsymbol{\theta}_1}^{\ast}+\boldsymbol{P}^{-1}\nabla \boldsymbol{h}^T\boldsymbol{RF}_{\boldsymbol{\theta}_1})  + \beta (\boldsymbol{\lambda}_{\boldsymbol{\theta}_2}^{\ast}+\boldsymbol{P}^{-1}\nabla \boldsymbol{h}^T\boldsymbol{RF}_{\boldsymbol{\theta}_2})\right]^T\\
       \cdot & \boldsymbol{P}\left[\alpha(\boldsymbol{\lambda}_{\boldsymbol{\theta}_1}^{\ast}+\boldsymbol{P}^{-1}\nabla \boldsymbol{h}^T\boldsymbol{RF}_{\boldsymbol{\theta}_1})  + \beta (\boldsymbol{\lambda}_{\boldsymbol{\theta}_2}^{\ast}+\boldsymbol{P}^{-1}\nabla \boldsymbol{h}^T\boldsymbol{RF}_{\boldsymbol{\theta}_2}) \right] \\
	  &+ (\alpha \boldsymbol{F}_{\boldsymbol{\theta}_1}+ \beta \boldsymbol{F}_{\boldsymbol{\theta}_2})^T\boldsymbol{Q}(\alpha \boldsymbol{F}_{\boldsymbol{\theta}_1}+ \beta \boldsymbol{F}_{\boldsymbol{\theta}_2}) + \boldsymbol{g}^T\boldsymbol{g}.
		  \end{aligned}
	\end{equation}
where the first inequality holds	because $\boldsymbol{\lambda}_{\alpha \boldsymbol{\theta}_1 + \beta \boldsymbol{\theta}_2}^{\ast}$ is the optimal solution that minimizes the loss function, and $\boldsymbol{\lambda} = \alpha \boldsymbol{\lambda}_{\boldsymbol{\theta}_1}^{\ast} + \beta \boldsymbol{\lambda}_{\boldsymbol{\theta}_2}^{\ast}$ makes the function value no smaller;   and the last equality holds because $\boldsymbol{F}_{\alpha \boldsymbol{\theta}_1 + \beta \boldsymbol{\theta}_2}=\alpha \boldsymbol{F}_{\boldsymbol{\theta}_1} + \beta \boldsymbol{F}_{\boldsymbol{\theta}_2}$ by \eqref{F_theta}.
Therefore, for every $ \boldsymbol{\theta}_1, \boldsymbol{\theta}_2 \in  \boldsymbol{\Theta}$ and every $\alpha,\beta \geq 0 ,\alpha + \beta = 1$, we obtain
\begin{align*}
	&\alpha l(\boldsymbol{\theta}_1;\boldsymbol{y},\boldsymbol{u}) + \beta l(\boldsymbol{\theta}_2;\boldsymbol{y},\boldsymbol{u}) - l(\alpha \boldsymbol{\theta}_1 + \beta \boldsymbol{\theta}_2;\boldsymbol{y},\boldsymbol{u}) \\
\overset{(\ref{wer})}{\geq} & \alpha (\boldsymbol{\lambda}_{\boldsymbol{\theta}_1}^{\ast}+\boldsymbol{P}^{-1}\nabla \boldsymbol{h}^T\boldsymbol{RF}_{\boldsymbol{\theta}_1})^T\boldsymbol{P}(\boldsymbol{\lambda}_{\boldsymbol{\theta}_1}^{\ast}+\boldsymbol{P}^{-1}\nabla \boldsymbol{h}^T\boldsymbol{RF}_{\boldsymbol{\theta}_1}) \\
  +& \beta (\boldsymbol{\lambda}_{\boldsymbol{\theta}_2}^{\ast}+\boldsymbol{P}^{-1}\nabla \boldsymbol{h}^T\boldsymbol{RF}_{\boldsymbol{\theta}_2})^T\boldsymbol{P}(\boldsymbol{\lambda}_{\boldsymbol{\theta}_2}^{\ast}+\boldsymbol{P}^{-1}\nabla \boldsymbol{h}^T\boldsymbol{RF}_{\boldsymbol{\theta}_2})\\
	-&\Big[ \alpha(\boldsymbol{\lambda}_{\boldsymbol{\theta}_1}^{\ast}+\boldsymbol{P}^{-1}\nabla \boldsymbol{h}^T\boldsymbol{RF}_{\boldsymbol{\theta}_1}) + \beta (\boldsymbol{\lambda}_{\boldsymbol{\theta}_2}^{\ast}+\boldsymbol{P}^{-1}\nabla \boldsymbol{h}^T\boldsymbol{RF}_{\boldsymbol{\theta}_2})\Big]^T\\ \cdot&\boldsymbol{P}\left[\alpha(\boldsymbol{\lambda}_{\boldsymbol{\theta}_1}^{\ast}+\boldsymbol{P}^{-1}\nabla \boldsymbol{h}^T\boldsymbol{RF}_{\boldsymbol{\theta}_1}) + \beta (\boldsymbol{\lambda}_{\boldsymbol{\theta}_2}^{\ast}+\boldsymbol{P}^{-1}\nabla \boldsymbol{h}^T\boldsymbol{RF}_{\boldsymbol{\theta}_2}) \right] \\
	&+ \alpha \boldsymbol{F}_{\boldsymbol{\theta} _1}^T\boldsymbol{QF}_{\boldsymbol{\theta} _1} + \beta \boldsymbol{F}_{\boldsymbol{\theta} _2}^T\boldsymbol{QF}_{\boldsymbol{\theta} _2} - (\alpha \boldsymbol{F}_{\boldsymbol{\theta}_1}+ \beta \boldsymbol{F}_{\boldsymbol{\theta}_2})^T\boldsymbol{Q}(\alpha \boldsymbol{F}_{\boldsymbol{\theta}_1}+ \beta \boldsymbol{F}_{\boldsymbol{\theta}_2}) \\= &\alpha \beta (\boldsymbol{\lambda}_{\boldsymbol{\theta}_1}^{\ast}+\boldsymbol{P}^{-1}\nabla \boldsymbol{h}^T\boldsymbol{RF}_{\boldsymbol{\theta}_1} - \boldsymbol{\lambda}_{\boldsymbol{\theta}_2}^{\ast} - \boldsymbol{P}^{-1}\nabla \boldsymbol{h}^T\boldsymbol{RF}_{\boldsymbol{\theta}_2})^T\\
  &\cdot \boldsymbol{P} (\boldsymbol{\lambda}_{\boldsymbol{\theta}_1}^{\ast}+\boldsymbol{P}^{-1}\nabla \boldsymbol{h}^T\boldsymbol{RF}_{\boldsymbol{\theta}_1} - \boldsymbol{\lambda}_{\boldsymbol{\theta}_2}^{\ast} - \boldsymbol{P}^{-1}\nabla \boldsymbol{h}^T\boldsymbol{RF}_{\boldsymbol{\theta}_2}) \\
  &+ \alpha \beta (\boldsymbol{F}_{\boldsymbol{\theta}_1}-\boldsymbol{F}_{\boldsymbol{\theta}_2})^T\boldsymbol{Q}(\boldsymbol{F}_{\boldsymbol{\theta}_1}-\boldsymbol{F}_{\boldsymbol{\theta}_2}) \geq 0
\end{align*}
\noindent where the last inequality holds because $\boldsymbol{Q}$ and $\boldsymbol{P}$ are positive semidefinite matrices according to Lemma \ref{lem-3}.
	
Therefore, we conclude that the loss function is convex with respect to the parameter $\boldsymbol{\theta}$.
\hfill $\Box$
\end{document}